\newtheorem{thm}{Theorem}
\newtheorem{lemma}{Lemma}[section]
\newtheorem{conjecture}[lemma]{Conjecture}
\newtheorem{prop}[lemma]{Proposition}
\newtheorem{rmk}[lemma]{Remark}
\newtheorem{cor}[lemma]{Corollary}
\newtheorem*{bel*}{Belief}
\newtheorem*{assum*}{Assumptions}
\newtheoremstyle{named}{}{}{\itshape}{}{\bfseries}{.}{.5em}{\thmnote{#3}}
\theoremstyle{named}
\newtheorem*{thm1*}{Theorem}
\def\Lam{\mathbb{\Lambda}}
\def\od{\text{off-diag}}
\newcommand{\diam}{\operatorname{diam}}
\def\Sha{{\text{\fontencoding{OT2}\selectfont SH}}}
\def\Reg{{\bf Reg1}{\upshape)}--{\bf Reg2}{\upshape)}\,}
\newcommand{\be}{\begin{equation}}
\newcommand{\ee}{\end{equation}}
\newcommand{\Z}{\mathbb Z}
\newcommand{\dist}{\operatorname{dist}}
\newcommand{\mult}{\operatorname{mult}}
\newcommand{\gap}{\operatorname{gap}}
\newcommand{\tr}{\operatorname{tr}}
\newcounter{remnr1}
\def\res1{
    \addtocounter{remnr1}{1}
    \vspace{2mm}\noindent{\bf (\Alph{remnr1})} }
\numberwithin{equation}{section}
\begin{document}

\title{The trimmed Anderson model at strong
disorder: localisation and its breakup}
\author{Alexander Elgart\textsuperscript{1}, Sasha Sodin\textsuperscript{2}}
\footnotetext[1]{Department of Mathematics, Virginia Tech, Blacksburg, VA,
24061 USA.
 E-mail: aelgart@vt.edu.
Supported in part by NSF under grant  DMS-1210982.}
\footnotetext[2]{Department of Mathematics,
Princeton University,
Fine Hall, Washington Road, Princeton, NJ 08544-1000, USA \&
School of Mathematical Sciences, Tel Aviv University,
Ramat Aviv,
Tel Aviv 699780,
Israel.
E-mail: sashas1@post.tau.ac.il.
Supported in part by NSF under grant PHY-1305472.}
\maketitle

\begin{abstract}
We explore the properties of  discrete random Schr\"odinger operators in which the random
part of the potential is supported on a sub-lattice.
In particular, we provide new conditions on the
sub-lattice under which Anderson localisation
happens at strong disorder, and provide examples
in which it can be ruled out.
\end{abstract}

\section{Introduction}

In this paper we collect several observations
pertaining to the spectral properties
of random Schr\"odinger operators in the absence
of the so-called covering condition, which stipulates that the random potential is supported on the entire lattice. Let $\Lam$ be a lattice of bounded connectivity $\leq \kappa$,
and let
\begin{equation}\label{eq:defH} H(g) = -\Delta + V_0 + gV \end{equation}
be the operator acting on $\ell_2(\Lam)$ by
\begin{equation}
 [H(g)\psi](x) = \sum_{y \sim x} (\psi(x) - \psi(y)) + (V_0(x) + g V(x)) \psi(x)~, \quad x \in \Lam~.
 \end{equation}
Here we assume that $V_0: \Lam \to \mathbb{R}$ is a deterministic background potential,
$V: (\Omega \times) \Lam \to \mathbb{R}$ is a random potential that assumes independent identically distributed
entries with distribution $\mu$ on a sublattice $\Gamma \subset \Lam$, and $g \geq 0$ is a coupling constant.
Following \cite{EK}, we call (\ref{eq:defH}) a $\Gamma$-trimmed random Schr\"odinger
operator on $\Lam$. The usual Anderson model is recovered when $\Gamma = \Lam = \mathbb{Z}^d$.

\vspace{2mm}\noindent
Recall that the Anderson model exhibits localisation at strong disorder: for $g \gg 1$, the spectrum
is pure point and the eigenfunctions are exponentially localised. Two strategies of proof are available:
the first one, called multi-scale analysis, was devised by Fr\"ohlich and Spencer \cite{FS} and the second
one, the fractional moment method,--- by Aizenman and Molchanov \cite{AM}. Both have many
variants and ramifications, too numerous to be listed here, and surveyed, for example, by Figotin and
Pastur \cite[Chapter~15C]{FP}, Kirsch \cite{Kirsch}, and Stolz \cite{St}. We also mention  the work of Imbrie \cite{I} in which an iterative scheme to diagonalise the random operator is  suggested.

It is expected, on physical grounds  \cite{LTW}, that,
as the strength of the disorder decreases, the Anderson model undergoes a phase transition, and the absolutely continuous component of the spectrum emerges. From the mathematical physics perspective, the proof of such actuality remains one of the greatest challenges in the field.

\smallskip
The variant of the Anderson model which we consider in this work is characterized by two parameters: the strength $g$ of the disorder as in the standard Anderson model, and the sublattice $\Gamma$ of $\Z^d$ in which we insert the random potential. 

For $\Gamma = \Z^d$ we recover the usual Anderson
model with almost sure pure point spectrum for large $g$. We mainly consider the
case when $\Gamma$ is a periodic sublattice of $\Z^d$, and explore the dependence of the spectral
properties at strong disorder $g \gg 1$ on the geometry of $\Gamma$: when  $\Gamma$ is sufficiently dense (in
the  sense defined in Theorem \ref{thm:loc2} below), the behaviour is similar to that
of the usual Anderson model (Anderson localisation),
whereas for a sparser $\Gamma$ new phenomena appear, see discussion below.

Another direction (which we do not explore in depth here) is to choose $\Gamma$ at random,
according to the product probability measure (site
percolation). Then the case $g = \infty$ is known
as quantum percolation (see the paper of Veseli\'{c} \cite{Ves} for a survey of results). Finite $g > 0$ leads
to a model which combines the features of the Anderson 
model with those of quantum percolation. Thus one may expect
an interesting phase diagram as one varies both
the strength of disorder $g$ and the relative density\footnote{e.g. $ \limsup_{R\rightarrow\infty}|B(0,R)\cap\Gamma|/|B(0,R)|$, where $B(0,R)$ is a ball of radius $R$ as in \eqref{eq:sqbox}.}
of $\Gamma$; the results of the current paper
indicate how parts of this phase
diagram should look. In particular, our results suggest that the delocalisation
part of the phase diagram for such models may be more amenable
to analysis than in the usual Anderson model.

\medskip
 Our initial interest in the trimmed Anderson model was triggered by the following question. The known proofs of localisation make use of a priori estimates on the resolvent (Wegner-type bounds), and these
in turn require that the support of the potential is the entire lattice (covering condition). One may ask whether
localisation at strong disorder still holds when the covering condition is violated.

In the continuum setting, an affirmative answer to this question was established at the bottom of the spectrum using the unique continuation principle  (UCP), \cite{Kuc,RV} (Wegner bounds for such models were first established in \cite{CHK}). Although UCP is not applicable for the lattice Schr\"odinger operators, 
Rojas-Molina \cite{RM1} and Klein with the first author \cite{EK} developed Wegner estimates adjusted
for the trimmed Anderson model. These estimates allowed to prove localisation in the strong disorder regime, at the bottom of
the spectrum. In \cite{RM1}, the case of zero background $V_0 = 0$ was considered, whereas \cite{EK}
handled arbitrary bounded background potentials.

We make a further contribution in this direction, and prove (Theorems~\ref{thm:loc1} and \ref{thm:loc2})
localisation at strong disorder in several additional situations (not necessarily at the bottom of the spectrum).

\vspace{2mm}\noindent
Further, we explore the possible alternatives to localisation which may occur at strong disorder.

In certain situations, we prove (Theorem~\ref{thm:anom'}) that
sufficiently high moments associated with the
Green function diverge. Although this
phenomenon occurs only at a discrete set
of special energies, it implies (Lemma~\ref{l:anom})
the divergence of high moments associated
with the quantum dynamics, which is in turn
incompatible with strong forms of Anderson
localisation. This anomalous behaviour has previously
been rigorously observed only in one-dimensional
models, cf.\ Jitomirskaya,  Schulz-Baldes, and
Stolz \cite{JSS}.

One possibility is the
emergence of an absolutely continuous component
of the spectral measure about the special energies.
While we currently can not rigorously rule out this possibility, we find the following alternative (anomalous localisation) more
plausible: the spectral measure is pure point, however,
the localisation length of the eigenvectors diverges
at the special energies with a power-law singularity.
The quantum dynamics
picks up the contribution from all eigenvectors, therefore the position of the quantum
particle is a heavy-tailed random variable, and
its high moments diverge as the time grows.

A na{\"\i}ve classical analogue of this phenomenon
(ignoring the subtleties of quantum dynamics and also
the presence of multiple channels)
is the following: a particle moves along a circle of length $L$ with unit velocity, where $L$ is a heavy-tailed random variable. While this is a case of
localisation in any possible sense, sufficiently
high moments of the distance from the origin
at time $t$ diverge as $t$ grows to infinity.

\vspace{2mm}\noindent
Finally, in certain spectral regions the trimmed Anderson model at strong disorder can be coupled
to a weak disorder Anderson-type model, and this leads us to believe that in these regions the model
exhibits delocalisation in dimension $d \geq 3$.

\vspace{4mm}\noindent
Now let us state the results in more detail. Throughout the paper, we make the following three
\begin{assum*}\hfill
\begin{enumerate}
\item[{\bf Inv)}]  $\Lam$  is the 
$d$-dimensional lattice $\mathbb{Z}^d$;  the sublattice $\Gamma$ and  the background potential $V_0$ are invariant under a cofinite
subgroup $\mathcal{G} \subset\Z^d$.
\item[{\bf Reg1)}] The distribution $\mu$ is $\alpha$-regular for some $\alpha > 0$, meaning that, for any $\epsilon > 0$ and $t \in \mathbb{R}$,
$\mu[t - \epsilon, t + \epsilon] \leq C\epsilon^\alpha$.
 \item[{\bf Reg2)}] $\mu$ has a finite $q$-moment for some $q > 0$, meaning that
\[ M_q = \int |t|^q d\mu(t) < \infty~.\]
\end{enumerate}
\end{assum*}

The invariance assumption {\bf Inv)} is introduced mainly for convenience, and to inscribe the problem
into the familiar setting of ergodic (metrically transitive) random operators; it can
be mostly omitted or relaxed.
The regularity assumptions \Reg are essentially used in the arguments.

\subsection{Anderson localisation}

Denote by $G_z[H] = (H-z)^{-1}$ the resolvent of a self-adjoint operator $H$ acting on $\ell^2(\mathbb{Z}^d)$.
If the fractional moment bound
\begin{equation}\label{eq:fmbd}
\sup_{\epsilon > 0} \sup_{x \in \mathbb{Z}^d} \sum_{y \in \mathbb{Z}^d} \mathbb{E} |G_{\lambda + i\epsilon}[H](x, y)|^s e^{\eta \| x - y\|}< \infty
\end{equation}
holds for some $0 < s < 1$ and $\eta > 0$, we say that $H$ exhibits Anderson localisation at
$\lambda \in \mathbb{R}$.  Here $\| \cdot \|$ stands for the graph distance (i.e.\ the $\ell^1$ distance) on $\mathbb{Z}^d.$

The methods developed by Aizenman \cite{Aiz} (see further \cite{ASFH})
show that if (\ref{eq:fmbd})
holds for all values of $\lambda$ in an interval $I \subset \mathbb{R}$, then one has the following more physical dynamical
localisation for the spectral restriction $H|_I = \mathbf{P}_I[H] \, H \, \mathbf{P}_I[H]$ of the operator $H$ to $I$:
\begin{equation}\label{eq:dynloc}
\sup_{x \in \mathbb{Z}^d} \mathbb{E} \sup_{t \geq 0}  \sum_{y \in \mathbb{Z}^d} \left| e^{i t H|_I}(x, y) \right|^2 e^{\widetilde{\eta} \|x-y\|} < \infty~.\end{equation}
These methods
do not require major modification in the context of the current paper, therefore we focus on single-energy bounds
(\ref{eq:fmbd}).

Following the previous work \cite{EK}, we are interested in the following question: under which conditions
on $\Gamma$ and $\lambda$ does Anderson localisation hold at strong disorder, $g \gg 1$? As observed in \cite{EK},
the restriction
\[ H_\Gamma = P_{\Gamma^c} H P_{\Gamma^c}^*\]
of $H$ to the complement of $\Gamma$ plays an
important r\^ole (here $P_{\Gamma^c}: \ell_2(\mathbb{Z}^d) \to \ell_2(\Gamma^c)$ denotes coordinate projection).

\begin{thm}\label{thm:loc1} Let $H(g)$ be a $\Gamma$-trimmed random Schr\"odinger operator on $\mathbb{Z}^d$
satisfying the Assumptions. Suppose $\lambda \notin \sigma(H_\Gamma)$. Then
there exist $0 < s < 1$ and $g_0 > 0$ so that (\ref{eq:fmbd}) holds for all $g \geq g_0$.
\end{thm}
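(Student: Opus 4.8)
\emph{Proof proposal.}
The plan is to absorb the deterministic part of $H(g)$ living on $\Gamma^c$ by a Schur--complement (Feshbach) reduction, which replaces $H(g)$ by an effective random operator supported only on $\Gamma$ --- one that \emph{does} satisfy the covering condition on the graph $\Gamma$ --- and then to run the fractional moment method for it. Split $\ell^2(\Z^d)=\ell^2(\Gamma)\oplus\ell^2(\Gamma^c)$ and, for $z=\lambda+i\epsilon$ with $\epsilon>0$, write
\be
H-z=\begin{pmatrix} A(z) & B\\ B^* & D(z)\end{pmatrix},\qquad A(z)=g\,V|_\Gamma+P_\Gamma(-\Delta+V_0)P_\Gamma^*-z,\quad D(z)=H_\Gamma-z .
\ee
The hopping block $B=P_\Gamma H P_{\Gamma^c}^*$ is deterministic and finite range (range one), and, since $V$ is supported on $\Gamma$, so is $D(z)=P_{\Gamma^c}(-\Delta+V_0)P_{\Gamma^c}^*-z$. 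As $\lambda\notin\sigma(H_\Gamma)$, for every $\epsilon>0$ one has $\dist(z,\sigma(H_\Gamma))\geq\delta:=\dist(\lambda,\sigma(H_\Gamma))>0$, so $D(z)^{-1}$ exists with $\|D(z)^{-1}\|\leq\delta^{-1}$, and the Combes--Thomas estimate gives $|D(z)^{-1}(x,y)|\leq C\,e^{-c\|x-y\|}$ with $C,c>0$ depending only on $\delta$ and $\kappa$ --- crucially, uniformly in $\epsilon$.

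Next I would form the Schur complement $S(z)=A(z)-B\,D(z)^{-1}B^*=g\,V|_\Gamma+W(z)-z$ on $\ell^2(\Gamma)$, where $W(z)=P_\Gamma(-\Delta+V_0)P_\Gamma^*-B\,D(z)^{-1}B^*$ is deterministic, bounded uniformly in $\epsilon$ by a constant independent of $g$, and --- by the finite range of $B$ and the Combes--Thomas bound --- has exponentially decaying off-diagonal entries, uniformly in $\epsilon$. Moreover $\operatorname{Im}W(z)=-B\,\operatorname{Im}\big(D(z)^{-1}\big)\,B^*\leq0$, so $\operatorname{Im}S(z)\leq-\epsilon<0$ and $S(z)$ is boundedly invertible; by block inversion, $S(z)^{-1}$ is exactly the $\Gamma$--$\Gamma$ corner of $G_z[H]$. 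Thus $S(z)$ is an Anderson-type operator on $\Gamma$ at disorder strength $g$: an i.i.d.\ potential $g\,V|_\Gamma$, present at \emph{every} site of $\Gamma$, plus a deterministic maximally dissipative perturbation with exponentially decaying kernel and norm $O(1)$ in $g$.

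The heart of the matter is to establish, for some $0<s<1$, $\eta>0$ and all $g\geq g_0$,
\be\label{eq:SfracM}
\sup_{\epsilon>0}\ \sup_{x\in\Gamma}\ \sum_{y\in\Gamma}\E\,\big|S(\lambda+i\epsilon)^{-1}(x,y)\big|^s\,e^{\eta\|x-y\|}<\infty .
\ee
This is the fractional moment method essentially as in the covered case. The a priori input comes from {\bf Reg1)}: freezing all variables but $V(x)$, the rank-one resolvent identity gives $S(z)^{-1}(x,x)=(gV(x)-\zeta)^{-1}$ with $\zeta$ independent of $V(x)$ and, by dissipativity, $\operatorname{Im}\zeta>0$; hence $\E|S(z)^{-1}(x,x)|^s\leq\int|gt-\zeta|^{-s}\,d\mu(t)\leq C_s\,g^{-s}$ whenever $s<\alpha$, and a parallel rank-one computation controls $\E|S(z)^{-1}(x,y)|^s$ for all $x,y$. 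One then expands $S(z)^{-1}$ in the off-diagonal part of $W(z)$ and, combining the a priori bound with a decoupling estimate --- where {\bf Reg2)} enters --- obtains for $g\geq g_0$ a contraction $\E|S(z)^{-1}(x,y)|^s\leq\sum_{u}\Phi(x,u)\,\E|S(z)^{-1}(u,y)|^s$ with $\sup_x\sum_u\Phi(x,u)\,e^{\eta\|x-u\|}<1$, so that \eqref{eq:SfracM} follows by iteration; the only twist relative to the textbook proof is that the hopping in $W(z)$ is exponentially rather than finitely ranged, which is harmless once $\eta$ is taken below the decay rate. This is the step I expect to be the main obstacle: one must check that $S(z)$ genuinely behaves like a strong-disorder Anderson model in spite of being non-self-adjoint (the a priori bound has to be run for a dissipative operator, with the sign of $\operatorname{Im}\zeta$ tracked carefully) and in spite of carrying long-range hopping on the possibly irregular graph $\Gamma$, with all constants uniform as $\epsilon\downarrow0$.

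Finally I would propagate \eqref{eq:SfracM} to the full lattice through the block-inverse formula
\be
G_z[H]=\begin{pmatrix} S(z)^{-1} & -S(z)^{-1}B\,D(z)^{-1}\\[2pt] -D(z)^{-1}B^*S(z)^{-1} & D(z)^{-1}+D(z)^{-1}B^*S(z)^{-1}B\,D(z)^{-1}\end{pmatrix}.
\ee
Every matrix element $G_z[H](x,y)$ is a finite sum of products of entries of $S(z)^{-1}$, of $B$ (finite range), and of $D(z)^{-1}$ (deterministic, exponentially decaying, uniformly in $\epsilon$). Using $\big|\prod_j a_j\big|^s=\prod_j|a_j|^s$ and $\big|\sum_j a_j\big|^s\leq\sum_j|a_j|^s$, taking expectations, and convolving the exponential weights supplied by \eqref{eq:SfracM} and by Combes--Thomas, one arrives at \eqref{eq:fmbd} with the same $s$ and a possibly smaller $\eta$, for all $g\geq g_0$.
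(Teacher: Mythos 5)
Your proposal is essentially the paper's own argument: the Schur--Banachiewicz reduction to the effective operator $gV|_\Gamma$ plus a deterministic part involving $T_\Gamma G_z[H_\Gamma]T_\Gamma^*$ (exponentially decaying by Combes--Thomas since $\lambda\notin\sigma(H_\Gamma)$), the Aizenman--Molchanov fractional-moment contraction on $\Gamma$ using the decoupling inequality from {\bf Reg1)}--{\bf Reg2)}, and the return to the full lattice via the resolvent identity is exactly how Proposition~\ref{prop:loc} is proved and Theorem~\ref{thm:loc1} deduced. The step you flag as the main obstacle (the non-self-adjoint, exponentially-ranged effective model) is handled in the paper by Lemma~\ref{l:am} together with Lemma~\ref{l:decoup}, which apply verbatim in that setting.
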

\begin{rmk}
It was shown in \cite{EK} that  $\inf_{E\in\sigma(H_\Gamma)}E>\inf_{E\in\sigma(H(g))}E$ almost surely, which implies that the statement above is non empty.
\end{rmk}
In section~\ref{s:loc1}, we prove the more general Proposition~\ref{prop:loc}, and deduce Theorem~\ref{thm:loc1}.
The proof is a relatively straightforward application of the fractional moment method of \cite{AM}.

\vspace{2mm}\noindent
The condition $\lambda \notin \sigma(H_\Gamma)$ is however not necessary for
Anderson localisation. To illustrate this, consider the
case when the complement of $\Gamma$ is a union
of finite connected components.  The following theorem
implies that, if the connected components are 
separated by a double layer of of sites in $\Gamma$ (``double insulation''), Anderson localisation holds
at all energies, including the eigenvalues of 
$H_\Gamma$.

\begin{thm}\label{thm:loc2} Let $H(g)$ be a $\Gamma$-trimmed random Schr\"odinger operator on $\mathbb{Z}^d$
satisfying the Assumptions. If $\Gamma^c$ is the union of finite connected components $B_j$ such that
$\dist(B_i, B_j) \geq 3$ for $i \neq j$, then there exist $0 < s < 1$ and $g_0>0$ such that (\ref{eq:fmbd}) holds for
all $g \geq g_0$ and all $\lambda \in \mathbb{R}$.
\end{thm}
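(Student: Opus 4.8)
\medskip
\noindent\textbf{Proof plan.}
The plan is to run the fractional moment scheme behind Theorem~\ref{thm:loc1} (Proposition~\ref{prop:loc}), reorganised around \emph{super-sites}: the singletons $\{x\}$, $x\in\Gamma$, and the components $B_j$. To each super-site I attach a finite home region --- $\{x\}$ itself for $x\in\Gamma$, and $\Lambda_j:=\{w:\dist(w,B_j)\le2\}$ for $B_j$. As usual, \eqref{eq:fmbd} for some $s\in(0,1)$ and all $g\ge g_0$ will follow once one has, for every super-site, a one-step decoupling inequality $\E|G_z(x,y)|^s\le\theta(g)\,\max_{x'}\E|G_z(x',y)|^s$ with $\theta(g)\to0$ as $g\to\infty$, the maximum running over the exterior boundary of the home region of $x$ and the two factors separated by the standard device of passing to the depleted resolvent (which is independent of the randomness inside the home region). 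For $x\in\Gamma$ this is exactly the single-site decoupling of \cite{AM}, with $\theta(g)\lesssim g^{-s}$ and a routine a priori bound; the new content is the statement for $x\in B_j$, and this is where the geometric hypothesis enters.

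\medskip
\noindent\textbf{Geometry and a unique continuation input.}
The separation $\dist(B_i,B_j)\ge3$ is used through two remarks: (i) $\Lambda_j\setminus B_j\subseteq\Gamma$ --- a site at distance $\le2$ from $B_j$ lying in some $B_k$ would force $\dist(B_j,B_k)\le2$ --- so $\Lambda_j$ is $B_j$ surrounded by \emph{two} full layers $\partial B_j,\partial^2 B_j\subset\Gamma$ of strong-disorder sites; and (ii) by {\bf Inv)} there are finitely many congruence classes of components, so $|\Lambda_j|\le M_0$ with $M_0$ depending only on $\Gamma$. I also use a discrete unique continuation fact: $-\Delta+W$ on $\mathbb{Z}^d$ has no finitely supported eigenfunction, for any $W$ --- if $\Psi\ne0$ is finitely supported and $m=\max\{x_d:\Psi(x)\ne0\}$, evaluating the eigenvalue equation at the sites with last coordinate $m+1$ forces $\Psi$ to vanish on the slice $\{x_d=m\}$, a contradiction. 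Writing $H_{B_j}:=P_{B_j}HP_{B_j}^*$ for the restriction and $\Xi_j:=P_{\partial B_j}HP_{B_j}^*$ for the hopping from $B_j$ into $\partial B_j$, this shows that every Dirichlet eigenvector $\psi$ of $H_{B_j}$ has $\Xi_j\psi\ne0$ --- otherwise its zero extension would be a finitely supported eigenfunction of $-\Delta+V_0$ on $\mathbb{Z}^d$ --- and, since there are finitely many block shapes, $\|\Xi_j\psi\|\ge c_0\|\psi\|$ with $c_0>0$ depending only on $\Gamma,V_0$.

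\medskip
\noindent\textbf{The block step.}
The content to be proved for $x\in B_j$ is: (a) $\sup_{\mathrm{Im}\,z>0}\E|G^{\Lambda_j}_z(x,y)|^s<\infty$ for $x,y\in B_j$; and (b) $\sup_{\mathrm{Im}\,z>0}\E|G^{\Lambda_j}_z(x,u)|^s\le Cg^{-s}$ for $x\in B_j$ and $u$ in the interior boundary of $\Lambda_j$ --- which lies in the outer collar $\partial^2 B_j\subset\Gamma$ --- all for $s<\alpha\wedge1$, $g\ge g_0$, and $C,g_0$ depending only on $\Gamma,V_0,\mu$. Granting (a)--(b), the geometric resolvent equation for $\Lambda_j\subset\mathbb{Z}^d$ and $|\Lambda_j|\le M_0$ deliver the one-step inequality with $\theta(g)\lesssim g^{-s}$ and the a priori bound at $x\in B_j$, and the proof closes as in Proposition~\ref{prop:loc}. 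To prove (a)--(b), I split $\ell^2(\Lambda_j)=\ell^2(B_j)\oplus\ell^2(C_j)$, $C_j=\partial B_j\cup\partial^2 B_j\subset\Gamma$, and take the Schur complement: the $B_j$-block of $(H^{\Lambda_j}-z)^{-1}$ is $M(z)^{-1}$ and its $(B_j,C_j)$-block is $-M(z)^{-1}\Xi_j^*D(z)^{-1}$, where $D(z)=g\,\mathrm{diag}(V|_{C_j})+O(1)-z$ is the restriction to the collars and $M(z):=H_{B_j}-z-\Xi_j^*D(z)^{-1}\Xi_j$ is the ``effective block Hamiltonian''. In (b) the factor $\Xi_j^*D(z)^{-1}$ links $x$ to the outer-collar site $u$ across two strong-disorder layers, and, controlled by the finite-volume fractional moment Wegner bound $\E\|D(z)^{-1}\|^s\lesssim g^{-s}$ together with the off-diagonal decay of $D(z)^{-1}$, it contributes two powers of $g^{-s}$.

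\medskip
\noindent\textbf{The main obstacle.}
The crux is $M(z)^{-1}$. Since $\|\Xi_j^*D(z)^{-1}\Xi_j\|\to0$ as $g\to\infty$, $M(z)$ is a small perturbation of $H_{B_j}-z$, whose inverse is \emph{not} bounded when $\lambda\in\sigma(H_{B_j})$ --- precisely the case excluded from Theorem~\ref{thm:loc1}. What must be shown is that the self-energy $-\Xi_j^*D(z)^{-1}\Xi_j$ regularises this resonance at scale $\gtrsim g^{-1}$: under a simultaneous shift $V|_{\partial B_j}\mapsto V|_{\partial B_j}+t$ the block eigenvalue of $H^{\Lambda_j}$ near $\lambda$ moves with derivative $-g\,\|P_{\partial B_j}D(z)^{-1}\Xi_j\psi\|^2$, which is $\asymp -g^{-1}$ generically, the lower bound stemming exactly from $\|\Xi_j\psi\|\ge c_0$ --- the unique continuation input. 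Converting this non-degeneracy into an $\mathrm{Im}\,z$-uniform averaging estimate $\E|M(z)^{-1}(x,x')|^s\lesssim g^s$ --- a bounded-dimensional fractional moment / Wegner statement, since $|\Lambda_j|\le M_0$, but one that must be carried out using only the $\alpha$-regularity of $\mu$ and with some care because the potentials on $\partial B_j$ are shared between $M(z)^{-1}$ and $\Xi_j^*D(z)^{-1}$ (a resolvent identity in those variables handles the overlap) --- is the delicate step. Once it is in place, the two-collar gain $g^{-2s}$ beats the resonant loss $g^{s}$, yielding $g^{-s}$ in (b) and finiteness in (a); everything else is the standard fractional moment machinery of \cite{AM,Aiz,ASFH} transplanted to super-sites.
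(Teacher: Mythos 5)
Your overall architecture is sound and your diagnosis of the difficulty is exactly right: the whole problem is the block resolvent at resonant energies $\lambda\in\sigma(H_{B_j})$, and the separation hypothesis matters because it guarantees layers of $\Gamma$-sites (carrying randomness) around each component, so that some unique-continuation-type input ("an eigenvector of the Dirichlet restriction cannot be invisible to the boundary") can convert boundary randomness into an averaging estimate. But at precisely that point your argument stops being a proof. The estimate you need, $\E\,|M(z)^{-1}(x,x')|^s\lesssim g^{s}$ uniformly in $\mathrm{Im}\,z>0$, is asserted to follow from the first-order eigenvalue motion under a simultaneous shift of $V|_{\partial B_j}$, "which is $\asymp -g^{-1}$ generically"; you yourself label the conversion "the delicate step" and do not carry it out. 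This is a genuine gap, not a routine verification: (i) the simultaneous shift $t$ is not one of the random variables of the model, and each actual variable $V(u)$, $u\in\partial B_j$, enters $M(z)$ non-monotonically through $D(z)^{-1}$, so the usual monotone spectral averaging does not apply directly; (ii) $\mu$ is only $\alpha$-regular (possibly singular), so one cannot integrate against a density; (iii) the claimed speed $g\,\|P_{\partial B_j}D(z)^{-1}\Xi_j\psi\|^2\asymp g^{-1}$ degenerates on the event that some $|V(u)|$ is large, and "generically" gives no uniform bound over the tails allowed by {\bf Reg2)}; (iv) the same boundary variables appear in the collar factors $\Xi_j^*D(z)^{-1}$ that are supposed to supply the $g^{-2s}$ gain, and the decoupling of this dependence is again only sketched. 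In short, the heart of the theorem --- a Wegner-type bound valid at the resonant energies --- is missing.

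For comparison, the paper avoids the Schur-complement/self-energy route altogether. It partitions $\Z^d$ into blocks $B$ of bounded diameter with $\partial_{\text{in}}B\subset\Gamma$ (this is where $\dist(B_i,B_j)\ge 3$ is used), and proves a finite-volume a priori bound $\E\,|G_z[H(g)|_B](x,y)|^s\le C(B,g)$ valid at \emph{all} energies (Corollary~\ref{cor:weg1}), by the quantitative propagation estimate of Lemma~\ref{l:bk}: interior eigenvector amplitudes are bounded by boundary amplitudes times products of potential factors, and the boundary sites lie in $\Gamma$, so Lemma~\ref{l:weg} applies there. The full resolvent is then expanded by (\ref{eq:resolv}) into chains of block Green functions whose junction points lie in $\Gamma$, so every intermediate factor is bounded by $Cg^{-s}$ via (\ref{eq:weg0}) and only the two end factors use the expensive constant $C(B,g)$; convergence and exponential decay follow for $g$ large. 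Your qualitative unique continuation observation (no finitely supported eigenfunction) is the right germ of an idea, but the paper's Lemma~\ref{l:bk} is its quantitative version, and it is exactly the ingredient your proposal postulates rather than proves. To repair your argument you would either have to establish your resonant Wegner bound for $M(z)$ under $\alpha$-regularity (essentially reproving Section~\ref{s:weg1} in Schur-complement form), or switch to the paper's a priori bound on the whole block, which sidesteps the resonance problem.
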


The proof of Theorem~\ref{thm:loc2} appears in Section~\ref{s:loc2}; it is also based on the fractional
moment method, and makes use of a Wegner-type estimate which we prove in Section~\ref{s:weg1}. 

\vspace{2mm}\noindent
 The reason due to which double insulation forces localisation has to do with the fact that it rules out the existence of non-trivial formal solutions $\psi$ for $H$ which are supported on $\Gamma^c$. Therefore, in the case when the complement of $\Gamma$
is a union of finite connected components,
the following conjecture would be a generalisation 
of both Theorem~\ref{thm:loc1} and 
 Theorem~\ref{thm:loc2}.

\begin{conjecture} Suppose that the complement of $\Gamma$ is a union of finite connected
components, and that $\lambda \in \mathbb{R}$ is such that the eigenvalue equation
\begin{equation}\label{eq:eveq}
H(0) \psi = \lambda \psi
\end{equation}
has no non-trivial formal solution $\psi$ supported on $\Gamma^c$. Then (\ref{eq:fmbd}) holds for
sufficiently large $g$.
\end{conjecture}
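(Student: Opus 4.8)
The plan is to eliminate, by a Feshbach--Schur step, the purely deterministic part of the operator that lives on $\Gamma^c$, to recognise the hypothesis as exactly the non-degeneracy condition that regularises this reduction as $z\to\lambda$, and then to run the fractional moment method of \cite{AM} --- as in Proposition~\ref{prop:loc} and in the proof of Theorem~\ref{thm:loc2} --- on the resulting effective random operator. Concretely, I would first write $\Z^d=\Gamma\sqcup\Gamma^c$ and block-decompose $H(g)$: the $\Gamma$-block is $A(g)=gV_\Gamma+W$ with $W$ bounded and deterministic (bounded because $V_0$ is $\mathcal G$-periodic by \textbf{Inv)}); the $\Gamma^c$-block is $H_\Gamma=\bigoplus_j H_{B_j}$, a direct sum of finite matrices; and the off-diagonal block $B=P_\Gamma H P_{\Gamma^c}^*$ is bounded and finite-range. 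For $z=\lambda+i\epsilon$, $\epsilon>0$, the Schur complement formula gives $P_\Gamma G_z P_\Gamma^*=(\widehat H_z-z)^{-1}$ with $\widehat H_z=A(g)-B(H_\Gamma-z)^{-1}B^*$ on $\ell^2(\Gamma)$, and every remaining matrix element of $G_z$ --- in particular those with an argument in some $B_j$ --- is recovered from $(\widehat H_z-z)^{-1}$, $B$, and the finite matrices $(H_{B_j}-z)^{-1}$ by the standard identities. So (\ref{eq:fmbd}) for $H(g)$ follows once we control $(\widehat H_z-z)^{-1}(x,y)$, $x,y\in\Gamma$, together with $B(H_\Gamma-z)^{-1}$, uniformly in $\epsilon\downarrow 0$.

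The role of the hypothesis enters next. A non-trivial formal solution of $H(0)\psi=\lambda\psi$ supported on $\Gamma^c$ is precisely a non-zero $\psi\in\ker(H_\Gamma-\lambda)$ with $B\psi=0$; the hypothesis thus asserts that $B$ is injective on $\ker(H_\Gamma-\lambda)$. Using the periodicity \textbf{Inv)} --- e.g.\ through a Floquet--Bloch decomposition over the dual group of $\mathcal G$, on which all the relevant spaces are finite-dimensional and depend continuously on the quasimomentum --- this upgrades to a uniform lower bound $\|B\psi\|\ge c_0\|\psi\|$ for $\psi\in\ker(H_\Gamma-\lambda)$. Writing $(H_\Gamma-z)^{-1}=(\lambda-z)^{-1}\Pi_\lambda+R_z$, with $\Pi_\lambda$ the (locally finite-rank) spectral projection of $H_\Gamma$ at $\lambda$ and $R_z$ bounded uniformly near $\lambda$, the only singular contribution to $\widehat H_z-z$ is $-(\lambda-z)^{-1}B\Pi_\lambda B^*$. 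Performing a Feshbach reduction of $\widehat H_z$ onto $\operatorname{ran}(B\Pi_\lambda)$ (equivalently, of $H(g)$ onto $\operatorname{ran}\Pi_\lambda$), the $1/\epsilon$ blow-up cancels: it produces a self-energy $(\lambda-z)\,\mathrm{Id}+\Sigma_z$ on $\operatorname{ran}\Pi_\lambda$ with $\Sigma_z=-\Pi_\lambda B^*\big(A(g)-z-BR_zB^*\big)^{-1}B\Pi_\lambda$ bounded uniformly in $\epsilon$, the point being exactly that $\ker(B\Pi_\lambda)=\{0\}$, so that $\Sigma_z$ is a genuine, non-zero random operator on $\operatorname{ran}\Pi_\lambda$ rather than something that vanishes on a subspace (which is what a formal solution would create, and what would pin $\lambda$ and destroy the Wegner estimate).

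After this, $(\widehat H_z-z)^{-1}$ is expressed through (i) the resolvent of $A(g)-z-BR_zB^*=gV_\Gamma+O(1)$, a strong-disorder Anderson-type operator on $\ell^2(\Gamma)$ to which Proposition~\ref{prop:loc} / the argument of Theorem~\ref{thm:loc1} applies directly, yielding (\ref{eq:fmbd}) with localisation length decreasing in $g$; and (ii) the inverse of the finite matrix $(\lambda-z)\,\mathrm{Id}+\Sigma_z$ attached to each cluster of mutually-close components, whose negative $s$-moments (for small $s$) are finite by \textbf{Reg1)} applied to the finitely many random variables $V(u)$ at the $\Gamma$-sites adjacent to that cluster --- it is here that $c_0>0$ makes these denominators invertible with bounds uniform in $\epsilon$. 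Feeding (i) and (ii) into a resolvent expansion that re-attaches $\Gamma^c$, and paying only bounded, $\epsilon$-uniform factors for each finite component (whose boundary-to-boundary resolvent is controlled by the same denominators), transfers (\ref{eq:fmbd}) from $\widehat H_z$ to $G_z[H(g)]$ on all of $\Z^d$; one then concludes exactly as in the proof of Theorem~\ref{thm:loc2}, using the Wegner-type estimate of Section~\ref{s:weg1} (suitably extended from single components to clusters) to close the fractional moment scheme.

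The hard part will be carrying out the last step uniformly in $\epsilon\downarrow 0$ when infinitely many components are ``resonant'' ($\lambda\in\sigma(H_{B_j})$), in which case the Feshbach-reduced operator on $\bigoplus_j\big(\operatorname{ran}\Pi_\lambda\cap\ell^2(B_j)\big)$ is a genuine infinite, weakly-coupled ($O(1/g)$) random model whose hopping and diagonal disorder both originate from $1/(gV(u))$ on shared $\Gamma$-neighbours $u$. One must show the fractional moment recursion still closes for this corrected operator: this needs the cluster denominators of well-separated clusters to decouple sufficiently (they do, being supported on disjoint bounded interface regions) together with a quantitative bound, via \textbf{Reg1)}, on the rare events where a cluster denominator is anomalously small. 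I expect establishing this ``robust'' form of the \cite{AM} method for $\widehat H_z$ --- in particular handling the case where the effective hopping model is genuinely connected at energy $\lambda$ --- to be the main obstacle; without $c_0>0$ the $(\lambda-z)^{-1}$ singularity is real, it encodes a formal solution, and (plausibly) the model delocalises.
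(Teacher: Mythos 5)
This statement is left open in the paper --- it is stated as a conjecture, with no proof offered --- so there is nothing to compare your argument against except its own internal completeness, and by your own admission it is not complete. The parts that are sound are the reduction and the reading of the hypothesis: writing $\Z^d=\Gamma\sqcup\Gamma^c$, applying the Schur--Banachiewicz formula (\ref{eq:SB}) as in Proposition~\ref{prop:loc}, and observing that ``no non-trivial formal solution supported on $\Gamma^c$'' means exactly that the coupling $B=P_\Gamma H(0)P_{\Gamma^c}^*$ (i.e.\ the adjacency operator $T_\Gamma$) is injective on the blockwise eigenspaces of $H_\Gamma$ at $\lambda$ --- this matches the paper's own remark that double insulation forces localisation because it rules out such solutions. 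The Floquet--Bloch upgrade to a uniform $c_0>0$ is plausible (there are finitely many component types by {\bf Inv)}), though you should note that ``formal solution'' includes non-$\ell^2$ solutions, so the compactness argument has to be run on Bloch fibres rather than on $\ell^2$ kernels.

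The genuine gap is the step you defer to your last paragraph, and it is the whole content of the conjecture. After the Feshbach reduction, what must be inverted at spectral parameter $\lambda-z\approx 0$ is the effective operator $\Sigma_z=-\Pi_\lambda B^*\bigl(gV|_\Gamma+O(1)-z\bigr)^{-1}B\Pi_\lambda$ on the resonant subspace. Two of your intermediate claims are not available as stated: $\Sigma_z$ is \emph{not} bounded uniformly in $\epsilon$ pathwise (only its fractional moments are controlled, via Lemma~\ref{l:am}-type bounds for the strong-disorder block), and $\Sigma_z$ is \emph{not} a direct sum of finite cluster matrices --- it couples all resonant components, and between components sharing a $\Gamma$-neighbour $u$ the off-diagonal entries are $\propto (gV(u))^{-1}$, i.e.\ of exactly the same order as the diagonal disorder. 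After rescaling by $g$ the effective model therefore has $O(1)$ hopping relative to $O(1)$, sign-indefinite, non-monotone and correlated randomness (the same $V(u)$ enters diagonal and off-diagonal entries), so the large-coupling fractional-moment step that powers Proposition~\ref{prop:loc} and Theorems~\ref{thm:loc1}--\ref{thm:loc2} simply does not apply to it; neither does the per-cluster Wegner bound of Section~\ref{s:weg1} settle the case in which components at mutual distance $2$ form an infinite cluster, where the effective operator is a genuinely connected order-one random model at a fixed energy. Establishing localisation for that object is a problem of the same type as, not easier than, the original conjecture (compare the strong-to-weak coupling of Section~\ref{s:s2w} and Conjecture~\ref{c:deloc}, which suggest that such effective models can be delicate in $d\ge 3$). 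So what you have is a reasonable reduction of the conjecture to a concrete effective problem, together with a correct identification of where $c_0>0$ must enter, but not a proof.
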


\subsection{Anomalous localisation}\label{susec:aloc}
The situation is different when the eigenvalue equation (\ref{eq:eveq}) has a solution supported on $\Gamma^c$. Let us first consider the
case when all the connected components of $\Gamma^c$ are finite. We believe that, generically, in this situation
\begin{equation}\label{eq:pmoment}
\lim_{\epsilon \to +0} \sum_{y \in \mathbb{Z}^d} \epsilon^2 \mathbb{E}|G_{\lambda+i\epsilon}[H](x, y)|^2\|x-y\|^p\end{equation}
is infinite for sufficiently large $p > 0$. The following theorem confirms this belief under additional hypotheses.

\begin{thm}\label{thm:anom'}
Let $H(g)$ be a trimmed random Schr\"odinger operator satisfying the Assumptions, 
 with arbitrary $g > 0$, so that all the
connected components of $\mathbb{Z}^d \setminus \Gamma$ are finite. Fix $x \in \mathbb{Z}^d$, and suppose that there exist a sequence of connected finite subgraphs $B_n  \subset \mathbb{Z}^d$  and a pair of constants $C, c > 0$ such that
\begin{enumerate}
\item $B(x, R_n) \subset B_n \subset B(x, (R_n)^C)$,
where
 $R_n < R_{n+1} < (R_n)^C$ and 
 \be\label{eq:sqbox}B(x,R)=\{y\in\Z^d:\ \|y-x\|\le R\};\ee
\item\label{item:period'} there exists $y \in B_n$ such that $\|x-y\| \geq (R_n)^c$, and the spectral projection
$\mathbf{P}_{\{\lambda\}}[H_n(0)]$  onto the  eigenspace of the restriction $H_n(0)=P_{B_{n}} H(0) P_{B_{n}}^*$ corresponding to $\lambda$ satisfies
\begin{equation}\label{eq:lrbdproj}\left| \mathbf{P}_{\{\lambda\}}[H_n(0)] (x, y) \right|
\geq (R_n)^{-C}~; \end{equation}
\item ${\rm Range} \, \mathbf{P}_{\{\lambda\}}[H_n(0)]\subset \ell^2(\Gamma^c)$ ;
\item
$\min \left \{ |\lambda'  - \lambda| \, \mid \, \lambda' \in \sigma(H_n(0)) \setminus \{\lambda\} \right\} \geq(R_n)^{-C}$.
\end{enumerate}
Then $\text{(\ref{eq:pmoment})}=\infty$ for sufficiently large $p$.
\end{thm}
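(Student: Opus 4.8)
\medskip
\noindent The mechanism behind the theorem is that the eigenvalue $\lambda$ of $H_n(0)$ is \emph{frozen}: since $V$ is supported on $\Gamma$ while, by hypothesis~(3), $\mathrm{Range}\,\mathbf{P}_{\{\lambda\}}[H_n(0)]\subset\ell^2(\Gamma^c)$, the matrix $\mathbf{P}_{\{\lambda\}}[H_n(0)]$ commutes with $P_{B_n}H(g)P_{B_n}^{*}$ for every $\omega$, so $\lambda$ stays an eigenvalue of this finite matrix with a \emph{deterministic} spectral projection; moreover the eigenvalue equation for $H_n(0)$ at the sites of $\Gamma\cap B_n$ forces every $\phi$ in that range to satisfy $\sum_{y\sim z}\phi(y)=0$ for $z\in\Gamma\cap B_n$, i.e.\ $\phi$ lies in the kernel of the $\Gamma^c\!\to\!\Gamma$ hopping except possibly at sites adjacent to $\partial B_n$. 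The plan is to transport this (near-)frozen singular behaviour to the infinite-volume Green function $G_{\lambda+i\epsilon}[H](x,y_n)$, where $y_n\in B_n$ is the far point of hypothesis~(2), and to show that, for a fixed large $p$,
\[
\epsilon^{2}\,\mathbb{E}\bigl|G_{\lambda+i\epsilon}[H](x,y_n)\bigr|^{2}\,\|x-y_n\|^{p}\ \gtrsim\ (R_n)^{\,cp-C'}\qquad\text{for }\epsilon\in(\delta_n^{-},\delta_n^{+}),
\]
with $\delta_n^{+}=(R_n)^{-C'}$ and $\delta_n^{-}\to 0$ as $n\to\infty$. Since, by hypothesis~(1), $R_n\to\infty$ with $R_{n+1}\le(R_n)^{C}$, the intervals $(\delta_n^{-},\delta_n^{+})$ can be arranged to cover a punctured neighbourhood of $0$, whence $\lim_{\epsilon\to+0}$ of (\ref{eq:pmoment}) $=\infty$ once $p>C'/c$.

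\smallskip
The reduction to $\Gamma^c$ is a Feshbach/Schur step. Write $H(g)$ in the block form associated with $\ell^2(\Gamma^c)\oplus\ell^2(\Gamma)$; the bottom-right block $D:=P_\Gamma H(g)P_\Gamma^{*}$ is itself a random Schr\"odinger operator for which the covering condition holds, so by Aizenman--Molchanov (equivalently, Theorem~\ref{thm:loc1} applied to $D$) it satisfies, for $g\gg1$, a fractional-moment bound $\sup_{\epsilon>0}\mathbb{E}|G_{\lambda+i\epsilon}[D](u,v)|^{s}\le Ce^{-\eta\|u-v\|}$ at the fixed energy $\lambda$, uniformly in $\epsilon$. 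The Schur complement formula $P_{\Gamma^c}G_z[H]P_{\Gamma^c}^{*}=(A-z-\Sigma_z)^{-1}$, with $A=P_{\Gamma^c}H(0)P_{\Gamma^c}^{*}$ and self-energy $\Sigma_z=P_{\Gamma^c}(-\Delta)P_\Gamma^{*}\,(D-z)^{-1}\,P_\Gamma(-\Delta)P_{\Gamma^c}^{*}$, then expresses $G_{\lambda+i\epsilon}[H](x,y_n)$ through $\Sigma_z$, whose matrix elements inherit uniform-in-$\epsilon$ fractional-moment bounds with exponential decay from the previous display.

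\smallskip
Now insert the frozen structure. Let $\phi_n$ be a unit $\lambda$-eigenvector of $H_n(0)$ realising, up to a polynomial-in-$R_n$ factor coming from the at most $|B_n|$-dimensional eigenspace, the bound of hypothesis~(2), so $|\phi_n(x)|,|\phi_n(y_n)|\ge(R_n)^{-C'}$; $\phi_n$ is deterministic and supported in $\Gamma^c\cap B_n$. By the remark above, $P_\Gamma(-\Delta)P_{\Gamma^c}^{*}\phi_n$ is supported on $\Gamma$-sites adjacent to $\partial B_n$ and has norm $O(1)$, whence $\Sigma_z\phi_n$, being $P_{\Gamma^c}(-\Delta)P_\Gamma^{*}(D-z)^{-1}$ applied to that vector, has amplitude near $y_n$ controlled in fractional moments, uniformly in $\epsilon$, by $Ce^{-\eta\,\dist(y_n,\partial B_n)}\le Ce^{-\eta R_n/2}$ (using $\dist(y_n,\partial B_n)\ge R_n-(R_n)^c$); the defect $(A-\lambda)\phi_n$ is likewise supported near $\partial B_n$. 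Thus $(A-z-\Sigma_z)\phi_n=-i\epsilon\phi_n+E_n$ with $E_n$ localised near $\partial B_n$, and a resolvent identity gives $G_{\lambda+i\epsilon}[H](x,y_n)=i\,\overline{\phi_n(x)}\,\phi_n(y_n)/\epsilon+r_n$, where $r_n=-\,i\phi_n(y_n)\epsilon^{-1}\langle\delta_x,(A-z-\Sigma_z)^{-1}E_n\rangle+\langle\delta_x,(A-z-\Sigma_z)^{-1}\delta_{y_n}^{\perp}\rangle$ and $\delta_{y_n}^{\perp}\perp\phi_n$. By hypothesis~(4) the resolvent transverse to the frozen direction is $O((R_n)^{C})$ for $\epsilon<(R_n)^{-C}$, while the $E_n$-term carries the factor $e^{-\eta R_n/2}$; choosing $\delta_n^{-}=(R_n)^{K}e^{-\eta R_n/3}$ one obtains $\mathbb{E}|r_n|\le\tfrac14|\phi_n(x)\phi_n(y_n)|/\epsilon$ on $(\delta_n^{-},\delta_n^{+})$, hence $\mathbb{E}|G_{\lambda+i\epsilon}[H](x,y_n)|^{2}\ge\tfrac12|\phi_n(x)\phi_n(y_n)|^{2}/\epsilon^{2}\ge\tfrac12(R_n)^{-2C'}/\epsilon^{2}$. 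Retaining only the term $y=y_n$ and this event in (\ref{eq:pmoment}) gives the displayed lower bound, and the proof is completed as in the first paragraph.

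\smallskip
\emph{The main obstacle} is the third paragraph, i.e.\ the quantitative control of the error $r_n$. Two points need care. First, a connected component of $\Gamma^c$ may straddle $\partial B_n$, so $\phi_n$ need not lie exactly in the kernel of the $\Gamma^c\!\to\!\Gamma$ hopping; one must bound the resulting boundary defect and show that its feedback to $y_n$ is suppressed by the localisation length of $D$ — this is exactly where the localisation of the environment (hence the strong-disorder hypothesis, via Theorem~\ref{thm:loc1} for $D$) enters, and it is what makes $\delta_n^{-}\to0$ possible. Second, $A=P_{\Gamma^c}H(0)P_{\Gamma^c}^{*}$ may have $\lambda$ as a degenerate eigenvalue, or other components may contribute eigenvalues very close to $\lambda$, so one must argue — using hypothesis~(4) and, if needed, $\mathbb{E}|G|^{2}\ge|\mathbb{E}\,G|^{2}$ to discard oscillating cross terms, or working with the full frozen projection instead of a single $\phi_n$ — that the $1/\epsilon$-singular contributions do not cancel. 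Given this, the matching of the intervals $(\delta_n^{-},\delta_n^{+})$ across $n$ is immediate from hypothesis~(1), and the fractional-moment input for $D$ at the single energy $\lambda$, uniform in $\epsilon>0$, is available precisely because the covering condition holds for $D$.
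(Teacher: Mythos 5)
Your starting point---the ``frozen'' eigenvalue, i.e.\ $f(H_n(g))\mathbf{P}_{\{\lambda\}}[H_n(0)]=f(\lambda)\mathbf{P}_{\{\lambda\}}[H_n(0)]$ because the random potential lives on $\Gamma$ while the eigenspace lives on $\Gamma^c$---is exactly the paper's key observation, but your execution diverges from it in a way that leaves genuine gaps. Most fundamentally, your argument needs strong disorder: you invoke Aizenman--Molchanov (Theorem~\ref{thm:loc1}) for the environment block $D=P_\Gamma H(g)P_\Gamma^*$ to get exponential decay of the self-energy and hence the exponentially small $\delta_n^-$, whereas the theorem is asserted for \emph{arbitrary} $g>0$. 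The paper needs no localisation input at all: it argues by contradiction, so the assumed finiteness of (\ref{eq:pmoment}) itself supplies the decay $\mathbb{E}|G_{\lambda+i\epsilon}[H](x,v)|^s\le M_p\,\epsilon^{-s}\|x-v\|^{-sp/2}$ along a sequence $\epsilon_j\searrow0$; one then chooses $n$ with $(R_n)^{-CA}\le\epsilon\le(R_n)^{-A}$ (possible by hypothesis~(1)), compares the infinite-volume Green function with the finite-volume resolvent $G_{\lambda\pm i\epsilon}[H_n(g)]$ via the geometric resolvent identity --- the boundary term costs only the trivial bound $\epsilon^{-1}$ times the assumed decay with $\|x-v\|\ge R_n$ --- and lower-bounds the finite-volume quantity: the $\mathbf{P}_{\{\lambda\}}$ block of $\Im G_n$ is deterministic and equals $\epsilon^{-1}\mathbf{P}_{\{\lambda\}}(x,y)\ge\epsilon^{-1}(R_n)^{-C}$, while the $\mathbf{Q}_{\{\lambda\}}$ block is controlled, for any fixed $g$, by the second Wegner estimate (Corollary~\ref{cor:weg2}), which is where hypotheses (3)--(4) enter. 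Everything stays inside $B_n$, which is what makes all the error terms polynomial in $R_n$ and independent of the disorder strength.

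Two specific steps in your third paragraph would fail even for $g\gg1$. First, you use $\dist(y_n,\partial B_n)\ge R_n-(R_n)^c$, but hypothesis~(2) only guarantees $\|x-y\|\ge(R_n)^c$ with $y\in B_n\subset B(x,(R_n)^C)$; the point $y$ may sit immediately next to $\partial B_n$ --- it is only $x$ that is at distance $\ge R_n$ from the boundary. Hence the exponential suppression $e^{-\eta\dist(y_n,\partial B_n)}$ of the boundary defect, and with it the choice $\delta_n^-=(R_n)^K e^{-\eta R_n/3}$ and the covering of a punctured neighbourhood of $0$, has no basis. Second, you bound the resolvent of the infinite-volume Schur complement $A-z-\Sigma_z$ transverse to $\phi_n$ by $(R_n)^C$ ``by hypothesis~(4)'', but (4) concerns only the box operator $H_n(0)$; under the invariance assumption, $\lambda$ is an eigenvalue of infinite multiplicity of $A=P_{\Gamma^c}H(0)P_{\Gamma^c}^*$ (translated copies of the same compactly supported eigenfunctions in other cells), so the transverse resolvent is of size $\epsilon^{-1}$, not $(R_n)^C$, and the possible interference between the many $1/\epsilon$-singular channels is exactly the issue you flag (``one must argue\dots'') but do not resolve. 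The paper's finite-volume formulation, together with Lemma~\ref{l:weg2'} and Corollary~\ref{cor:weg2}, is precisely the device that eliminates both problems at once.
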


\begin{rmk} 
The second and third assumptions state that there exist
non-trivial  formal solutions of (\ref{eq:eveq})  
on large boxes, and that all these solutions
are supported on $\Gamma^{c}$. These conditions
imply in particular the existence of a non-trivial
formal solution on the entire lattice. The condition \eqref{eq:lrbdproj} implies that these  formal solutions exhibit  at most  power-law spatial growth, i.e.\ they are generalized eigenfunctions of $H(0)$, whereas the last assumption of the theorem asserts that the spectral gap between $\lambda$ and the rest of the spectrum decreases as a power of the size of the system, which is a generic condition for a periodic Schr\"odinger operator. Finally, the first assumption
is a mild regularity condition on the growth of 
the boxes.
\end{rmk}

The proof of Theorem~\ref{thm:anom'}
appears in Section~\ref{s:anom}.

\medskip\noindent
Let us present a couple of examples for the case of  zero background potential $V_0 =0$ 
in two dimensions $d = 2$. One can also construct examples in higher dimension
along the same lines.

The first example is 
\[ \Gamma_1(k,m) =  \left\{ x \in \mathbb{Z}^2 \, \mid \,   x_1  \in k \mathbb{Z}  \vee x_2 \in m\mathbb{Z}\right\}~,\]
where $k,m\geq 2$ is a pair of fixed natural numbers. In this case, any eigenfunction of the Dirichlet Laplacian
in the rectangular fundamental  cell $\{ x_1 \in \{1,\cdots,k-1\},\, x_2 \in \{1,\cdots,m-1\}\}$ can be extended (by reflection) to a periodic
eigenfunction of the Laplacian on $\mathbb{Z}^2$ which vanishes on $\Gamma_1(k,m)$.

The same is true for 
\[ \Gamma_2(k) = \left\{ x \in \mathbb{Z}^2 \, \mid \,  x_1 \in k \mathbb{Z} \vee x_2 - x_1 \in 2 \mathbb{Z} \right\}\]
when $k \geq 2$. In this example the fundamental cell is  a parallelogram.

\vspace{2mm}\noindent
Although Theorem~\ref{thm:anom'} proves the divergence of (\ref{eq:pmoment}) at a single energy
only, this is sufficient to imply that sufficiently
high moments
 \[ M_p(x, t) = \sum_{y \in \mathbb{Z}^d} \mathbb{E} |e^{itH}(x, y)|^2 \|x-y\|^p\]
associated with the unitary evolution (quantum
dynamics) $e^{itH}$
also diverge. Indeed, the following lemma holds
(see Section~\ref{s:anom} for the proof).
\begin{lemma}\label{l:anom}
For any $\lambda \in \mathbb{R}$, $\epsilon > 0$,
and $x \in \mathbb{Z}^d$,
\begin{equation}\label{eq:1ener} \int_0^\infty \epsilon e^{-\epsilon t} M_p(x, t) dt
\geq \sum_{y \in \mathbb{Z}^d} \mathbb{E} \epsilon^2 |G_{\lambda + i\epsilon}[H](x,y)|^2  \|x-y\|^p~.
\end{equation}
\end{lemma}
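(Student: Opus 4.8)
The plan is to express the resolvent as a Laplace transform of the unitary propagator and then feed this into Jensen's inequality with respect to an exponential probability measure. Set $d\nu_\epsilon(t)=\epsilon e^{-\epsilon t}\,dt$, which is a probability measure on $[0,\infty)$ since $\int_0^\infty\epsilon e^{-\epsilon t}\,dt=1$. For $\epsilon>0$ one has the norm-convergent identity
\[ (H-\lambda-i\epsilon)^{-1}=i\int_0^\infty e^{i\lambda t}e^{-\epsilon t}e^{-itH}\,dt \]
(the integrand is bounded in operator norm by $e^{-\epsilon t}$), and taking the $(x,y)$ matrix element gives $\epsilon\,G_{\lambda+i\epsilon}[H](x,y)=i\int_0^\infty e^{i\lambda t}\,e^{-itH}(x,y)\,d\nu_\epsilon(t)$.

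Next I would use that $H=-\Delta+V_0+gV$ has real matrix entries, so that $\overline{e^{itH}}=e^{-itH}$ entrywise and hence $|e^{-itH}(x,y)|=|e^{itH}(x,y)|$ for every $x,y,t$. Fixing the randomness together with $x$ and $y$, Jensen's inequality (equivalently Cauchy--Schwarz against the constant $1$) applied to $\nu_\epsilon$ and the convex map $z\mapsto|z|^2$ yields
\[ \epsilon^2\,|G_{\lambda+i\epsilon}[H](x,y)|^2=\Big|\int_0^\infty e^{i\lambda t}\,e^{-itH}(x,y)\,d\nu_\epsilon(t)\Big|^2\le\int_0^\infty|e^{-itH}(x,y)|^2\,d\nu_\epsilon(t)=\int_0^\infty|e^{itH}(x,y)|^2\,d\nu_\epsilon(t). \]
I would then multiply through by $\|x-y\|^p\ge 0$, sum over $y\in\mathbb{Z}^d$, and take the expectation; since all summands are nonnegative, Tonelli's theorem permits interchanging $\sum_y$, $\mathbb{E}$ and $\int_0^\infty d\nu_\epsilon$, and the resulting inner expression $\sum_y\mathbb{E}|e^{itH}(x,y)|^2\|x-y\|^p$ is exactly $M_p(x,t)$. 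This gives
\[ \sum_{y\in\mathbb{Z}^d}\mathbb{E}\,\epsilon^2|G_{\lambda+i\epsilon}[H](x,y)|^2\,\|x-y\|^p\le\int_0^\infty M_p(x,t)\,d\nu_\epsilon(t)=\int_0^\infty\epsilon e^{-\epsilon t}M_p(x,t)\,dt, \]
which is precisely (\ref{eq:1ener}).

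There is no serious obstacle here; the only steps needing a word of care are the two just flagged. First, the resolvent representation naturally produces $e^{-itH}$, while $M_p$ is built from $e^{itH}$; the reconciliation rests on the reality of the Schrödinger operator, and for a genuinely complex $H$ the same computation would instead bound a "transposed" moment $\sum_y\mathbb{E}|e^{itH}(y,x)|^2\|x-y\|^p$. Second, the Fubini/Tonelli interchanges are automatic because every quantity in sight is nonnegative, so the inequality is valid in $[0,+\infty]$ irrespective of whether either side is finite — which is exactly the form in which it is used together with Theorem~\ref{thm:anom'}.
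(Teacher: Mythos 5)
Your proof is correct and follows essentially the same route as the paper: a Laplace-transform representation of the resolvent, Cauchy--Schwarz/Jensen against the probability measure $\epsilon e^{-\epsilon t}\,dt$, and Tonelli for the nonnegative sum. The only cosmetic difference is that you work directly with $G_{\lambda+i\epsilon}$ and $e^{-itH}$, invoking the realness of $H$ up front, whereas the paper bounds $G_{\lambda-i\epsilon}$ via $e^{itH}$ and flips the sign of $\epsilon$ at the very end.
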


Thus, in the setting of Theorem~\ref{thm:anom'}, the moments $M_p(x, t)$ are unbounded (as $t \to \infty$) for sufficiently large $t$. We emphasise that
this behaviour is not necessarily a sign of delocalisation. If, as we assumed, all the components
of $\Gamma^c$ are finite, a solution $\psi$ of (\ref{eq:eveq}) supported on $\Gamma^c$ may exist only
for a discrete set of energies
$\lambda$. It is plausible that the operator $H(g)$ at strong disorder has pure point spectrum with exponentially
decaying eigenfunctions, and that the anomalous behaviour (\ref{eq:pmoment}) reflects the divergence
of the localisation length at the special energies. If this is the case, it is an instance of a phenomenon
sometimes referred to as anomalous
localisation, cf.\ the survey of Izrailev, Krokhin, and Makarov \cite{IKM}.

To establish anomalous localisation (as opposed to, say, the presence of
continuous spectrum in the vicinity of $\lambda$), one needs to complement Theorem~\ref{thm:anom'} with an upper bound
on (\ref{eq:pmoment}) for small $p>0$. We have not been able to accomplish this task. To the best of our
knowledge, anomalous localisation has to date only been proved in several one-dimensional models; we refer
in particular to the work of Jitomirskaya, Schulz-Baldes and Stolz \cite{JSS}.

\subsection{Delocalisation}\label{susec:deloc}
The third possibility that can occur in the invariant setting {\bf Inv)} is that $\Gamma^c$
is connected (or at least has an infinite connected component), and $\lambda$ lies
in a band of absolutely continuous spectrum of the periodic operator $H_\Gamma$.

\begin{conjecture}\label{c:deloc} Let $g \gg 1$, and let $I$ be an interval in the absolutely continuous spectrum of
$H_\Gamma$. If $d = 2$, $H(g)$ exhibits localisation (\ref{eq:fmbd}); when
$d \geq 3$, $H(g)$ has absolutely continuous spectrum on $I$.
\end{conjecture}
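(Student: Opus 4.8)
Since Conjecture~\ref{c:deloc} is stated as a conjecture rather than a theorem, we only outline a plausible line of attack. The natural first move is a Feshbach--Schur reduction that integrates out the strongly random sites. Decompose $\ell^2(\mathbb Z^d)=\ell^2(\Gamma^c)\oplus\ell^2(\Gamma)$ and write $H(g)$ as a block operator with diagonal blocks $H_\Gamma=P_{\Gamma^c}H P_{\Gamma^c}^{*}$ and $D(g)=P_\Gamma H P_\Gamma^{*}=g\,V|_\Gamma+D_0$, where $D_0=P_\Gamma(-\Delta+V_0)P_\Gamma^{*}$ is deterministic and bounded, and off-diagonal block $B=P_{\Gamma^c}(-\Delta)P_\Gamma^{*}$, a finite-range operator supported on the edges joining $\Gamma$ to $\Gamma^c$. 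For $z=\lambda+i\epsilon$ the Schur complement formula gives
\[ P_{\Gamma^c}G_z[H]P_{\Gamma^c}^{*}=\big(H_{\mathrm{eff}}(z)-z\big)^{-1},\qquad H_{\mathrm{eff}}(z)=H_\Gamma-B\,(D(g)-z)^{-1}B^{*}, \]
and the remaining blocks of $G_z[H]$ are likewise expressed through $(H_{\mathrm{eff}}(z)-z)^{-1}$ and $(D(g)-z)^{-1}$. Writing $(D(g)-z)^{-1}=g^{-1}(V|_\Gamma-z/g)^{-1}\sum_{n\ge 0}\big(-g^{-1}D_0(V|_\Gamma-z/g)^{-1}\big)^{n}$ and resolving the randomness variable by variable, exactly as in the (fractional--moment) proof of Theorem~\ref{thm:loc1} --- here {\bf Reg1)} guarantees that the entries $1/(V(u)-z/g)$ have finite moments of every order $<\alpha$, uniformly as $\epsilon\downarrow0$ --- one obtains
\[ H_{\mathrm{eff}}(z)=H_\Gamma-\tfrac1g\,R(z),\qquad R(z)=B\,\mathrm{diag}\!\big(\tfrac{1}{V(u)-z/g}\big)_{u\in\Gamma}B^{*}+O(1/g), \]
a finite-range random operator $R(z)$ on $\ell^2(\Gamma^c)$ whose matrix elements are built from the i.i.d.\ variables $\{V(u)\}_{u\in\Gamma}$ and whose dependence on $z$ is $O(1/g)$ and regular on $I$.

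This identifies $H(g)$ restricted to $I$ with a \emph{weak-disorder} (coupling $O(1/g)$) random perturbation of the periodic operator $H_\Gamma$, the randomness being supported within bounded distance of $\Gamma$; since $\Gamma$ is invariant under a cofinite subgroup, its density can be taken small, so the effective perturbation is spatially sparse. From here the two halves of the conjecture follow, at least heuristically, the standard lore. For $d=2$ one would run the fractional--moment method (or a multiscale scheme) for $H_{\mathrm{eff}}$: the Wegner-type input is available because each $V(u)$ enters $R(z)$ non-degenerately, so one integrates in that single variable and invokes {\bf Reg1)}, and the geometry poses no obstruction since $\Gamma^c$ is connected in this regime. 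For $d\ge3$, since $I\subset\sigma_{ac}(H_\Gamma)$ the unperturbed resolvent $G_z[H_\Gamma]$ satisfies a limiting absorption principle away from band edges, and one would try to propagate the absolutely continuous spectrum through the perturbation --- e.g.\ by a self-energy / Green-function continuity expansion in the small parameter $1/g$, or by comparison with a tree-like approximation in the spirit of the Bethe-lattice results.

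The main obstacle is, unavoidably, this last step: for genuine $\mathbb Z^d$ random Schr\"odinger operators, the statements that weak disorder localises in two dimensions and preserves absolutely continuous spectrum in three and more dimensions are two of the central open problems of the subject, the latter known rigorously only on tree-like graphs. The reduction above does not remove this difficulty; what it does is make the weak-coupling structure explicit, pin the relevant energies down to a band of $\sigma_{ac}(H_\Gamma)$, and replace the bare potential by a perturbation living on a sparse, tunable sublattice over an explicit periodic background with known transport properties. It is in this sense that we expect the delocalisation side of the phase diagram for the trimmed model to be more amenable to analysis than for the usual Anderson model.
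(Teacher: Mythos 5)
This statement is a conjecture, and the paper offers only non-rigorous support for it, namely the strong-to-weak disorder coupling of Section~\ref{s:s2w}: the Schur--Banachiewicz formula relates $H(g)$ at strong disorder to an Anderson-type operator at weak coupling $O(1/g)$ over the periodic background, after which one invokes the expected (but open) weak-disorder phenomenology in $d=2$ and $d\geq 3$. Your outline is essentially the same argument --- a Schur complement over the random sites producing $H_\Gamma$ plus an $O(1/g)$ finite-range random perturbation, carried out directly on $\mathbb{Z}^d$ rather than first on the hedgehog lattice as in Sections~\ref{s:hedge}--\ref{s:heur} --- and, like the paper, it correctly acknowledges that the decisive weak-disorder step remains open, so no proof is obtained (nor claimed) in either case.
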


To support this conjecture, we  introduce in Section~\ref{s:hedge} a (rigorous) coupling between
random operators at strong and weak disorder. Similar ideas have been applied
in different context by Wang \cite{Wang}.

In Section~\ref{s:heur} we provide an heuristic argument
(making use of this coupling) in favour of  Conjecture~\ref{c:deloc}:  a trimmed Anderson operator at strong disorder is coupled to an Anderson-type operator at weak disorder in the same
dimension. If the Anderson-type operator exhibits localisation at $d = 2$ and delocalisation at $d = 3$
 (as one may believe based on  the conjectures for the usual Anderson model \cite{LTW}), the same properties are
inherited by the trimmed Anderson operator from
which we started.

\subsection{Other topics}

The following topics are also discussed in this paper.

First, the proofs of Theorems~\ref{thm:loc2} and \ref{thm:anom'} require somewhat
non-standard Wegner-type estimates, which we prove in Sections~\ref{s:weg1} and \ref{s:weg2}.

Second, as an additional application of the strong-to-weak disorder coupling of Section~\ref{s:hedge},
we provide a new proof of a theorem of Aizenman \cite{Aiz} (labelled here as
Theorem~\ref{thm:aiz}) on localisation at the spectral edges
at weak disorder.

\section{Preliminaries}

\subsection{Notation}\label{s:not}

Two sites $x, y\in \Lam$ are adjacent, $x \sim y$,
if they are connected by an edge.

If $B \subset \Lam$ is a subset of the lattice, the boundary $\partial B$ is the set of edges $(x, y)$ with $x \in B$ and $y \notin B$; denote
by $\partial_\text{in} B$ and $\partial_\text{out} B$ its projections onto the $x$- and $y$-coordinate, respectively. $P_B$ and $P_{B^c}$ denote the coordinate projections onto $B$ and its complement,
respectively.

Denote by  $\sigma(A)$ the spectrum of an operator $A$, and by $G_z[A] = (A-z)^{-1}$ the resolvent of $A$ (defined for $z \notin \sigma(A)$). If $A$ acts
on $\ell^2(\Lam)$, denote by
\[ G_z[A](x, y) = \langle \delta_x, (A-z)^{-1}\delta_y \rangle~, \quad x,y \in \Lam, \]
the matrix elements of the resolvent (the Green function).

If $A$ is self-adjoint and $J \subset \mathbb{R}$ is a
Borel set, we denote by $\mathbf{P}_J[A]$
the spectral projection on $J$. Sometimes we use the
notation
\[ \mathbf{Q}_J[A] = \mathbf{P}_{J^c}[A] = \mathbb{1} - \mathbf{P}_J[A]~.\]

Finally, we denote by $C$  a sufficiently
large positive constant, and by $c$ -- a sufficiently small positive constant; the values of $C$
and $c$ may change from line to line.

\subsection{Properties of the resolvent}

The following two formul{\ae} are especially useful for computing the Green function. The first one
is the
Schur--Banachiewicz formula: if $A$ is an invertible operator acting on $\ell^2(\Lam)$, $X\subset \Lam$,  then
\begin{equation}\label{eq:SB}
P_X A^{-1} P_X^* = \left( P_X A P_X^* - P_X A P_{X^c}^* \frac{1}{P_{X^c} A P_{X^c}^*} P_{X^c} A P_X^* \right)^{-1}~.
\end{equation}

 The second one is the resolvent identity, valid when $A$ is an operator of the form $A = -\Delta + U$ (the potential
 $U$ need not be real):
\begin{multline}\label{eq:resolv}
G_z[A](x, y) = \\
\begin{cases}
\displaystyle\sum_{X \ni u' \sim u \in X^c} G_z[A](x, u') G_z[A_X](u, y)~, & x\in X~, \, y \notin X \\
\displaystyle\sum_{X^c \ni u \sim u' \in X} G_z[A_X](x, u) G_z[A](u', y)~, & x\notin X~, \, y \in X \\
\displaystyle\sum_{\substack{X^c \ni u \sim u' \in X, \\ X \ni v' \sim v \in X^c}} G_z[A_X](x, u) G_z[A](u', v') G_z[A_X](v, y)~,
  &x,y \notin X~.
\end{cases}
\end{multline}

\vspace{2mm}\noindent
Next, we shall make use of the Combes--Thomas estimate \cite{CT}, which states that if $A = -\Delta + U$ is a
Schr\"odinger operator ($U$ is now real) on a lattice $\Lam$ of bounded connectivity, and $z \notin \sigma(H)$,
then $|G_z[A](x,y)|$ decays exponentially in $\dist(x, y)$:
\begin{equation}\label{eq:ct}
|G_z[A](x,y)| \leq C \exp(- c \dist(x, y)) \quad (z \notin \sigma(A))~,
\end{equation}
where the constants $C, c > 0$ depend only on the distance from $z$ to the spectrum of $A$ and on the connectivity
of the lattice. A version with a sharp dependence of $c$ on the distance from the spectrum was proved by
Barbaroux, Combes, and Hislop \cite{BCH}.

\subsection{Fractional moments: auxiliary estimates}

Here we cite two estimates which commonly appear in the applications of the
fractional moment
method, and go back to the original work of Aizenman and Molchanov \cite{AM}.

The first one is a decoupling inequality for rational functions. We cite it in the form of \cite[Proposition~3.1]{we},
which is  slightly more general than the original one of \cite[Appendix III]{AM} (where fractional-linear functions were considered).

\begin{lemma}\label{l:decoup} Let  $\mu$ be a probability measure on $\mathbb{R}$ satisfying the assumptions \Reg.
Let $a_1, \cdots, a_l, b_1, \cdots, b_m \in \mathbb{C}$, and let $s, r>0$ be such
that $rm < \alpha$ and $q \geq (sl+rm) \frac{\alpha}{\alpha - rm}$. Then
\[ \int \frac{\prod_{j=1}^l |v - a_j|^s}{\prod_{i=1}^m|v-b_i|^r} \, d\mu(v)
    \asymp \frac{\prod_{j=1}^l (1+|a_j|)^s}{\prod_{i=1}^m (1  +|b_i|)^r}~,\]
where the "$\asymp$" sign means that $\text{LHS} \leq C \, \text{RHS} \leq C' \,\text{LHS}$, and the numbers
$C,C'>0$ do not depend on the $a_j$ and $b_i$.
\end{lemma}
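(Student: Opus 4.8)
The plan is to reduce the general rational-function estimate to the two one-sided bounds
$\text{LHS} \le C\,\text{RHS}$ and $\text{RHS} \le C'\,\text{LHS}$, and to prove each
separately by elementary pointwise comparisons combined with the regularity assumptions
\Reg. For the lower bound $\text{RHS} \le C'\,\text{LHS}$ one writes, using
$|v-a_j| \ge \bigl||v| - |a_j|\bigr|$ is the wrong direction, so instead I would use the
reverse: for any fixed $v$ in a region where $|v| \le 2\max_j(1+|a_j|)$ one has
$|v-a_j| \le (1+|v|)(1+|a_j|) \le C(1+|a_j|)$, hence
$\prod_j|v-a_j|^s \le C\prod_j(1+|a_j|)^s$, while in the denominator
$|v-b_i| \le C(1+|b_i|)$ on the same region; to get a matching \emph{lower} bound on the
integral one restricts to the set where every $|v-a_j|$ is comparable to $1+|a_j|$ and every
$|v-b_i|$ is comparable to $1+|b_i|$, which has $\mu$-measure bounded below by a constant
(this is where $\alpha$-regularity is used to rule out the measure concentrating near the
finitely many points $a_j,b_i$ where the integrand could be small). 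The integrability of the
integrand over this set, needed to make ``$\int \ge c\,(\text{integrand on a good set})$''
meaningful, follows from the moment hypothesis $q \ge (sl+rm)\tfrac{\alpha}{\alpha-rm}$
together with $rm<\alpha$.

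For the upper bound $\text{LHS} \le C\,\text{RHS}$ the heart of the matter is the singular
denominator. First, factor $1+|a_j| \ge c(1+|v|)^{-1}\cdot$ nothing useful — rather, I would
bound the numerator crudely by $\prod_j|v-a_j|^s \le \prod_j(1+|v|)^s(1+|a_j|)^s$, pulling
out $\prod_j(1+|a_j|)^s$ and leaving a factor $(1+|v|)^{sl}$, and similarly replace
$1/\prod_i|v-b_i|^r$ by $\prod_i(1+|b_i|)^{-r}\cdot\prod_i\bigl((1+|b_i|)/|v-b_i|\bigr)^r$.
It then suffices to show
\[
\int (1+|v|)^{sl}\prod_{i=1}^m\Bigl(\frac{1+|b_i|}{|v-b_i|}\Bigr)^r d\mu(v) \le C,
\]
uniformly in the $b_i$. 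Here I would split $\mathbb{R}$ according to which $|v-b_i|$ is
smallest, and on each piece use the generalized Hölder inequality to distribute the
singularities: $\alpha$-regularity gives, for each single factor,
$\int_{|v-b_i|\le 1}|v-b_i|^{-r'}\,d\mu(v) < \infty$ whenever $r'<\alpha$, and choosing the
Hölder exponents so that each singular factor sees an exponent $rm < \alpha$ makes every
local integral finite; the region $|v-b_i|\gtrsim 1$ contributes a bounded factor times
$\int (1+|v|)^{sl}\,d\mu$ which is finite by the moment bound, after checking
$sl \le q$ (implied by the stated inequality on $q$). The uniformity in the $b_i$ comes from
the scale-invariant form of the regularity bound $\mu[t-\epsilon,t+\epsilon]\le C\epsilon^\alpha$.

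The main obstacle is the bookkeeping in the upper bound when several of the $b_i$ collide or
cluster, so that products of singularities must be controlled simultaneously: one has to make
sure the total singular exponent charged against $\mu$ in any localized region never reaches
$\alpha$, which is exactly the role of the hypothesis $rm<\alpha$ (so that even if all $m$
poles coincide the exponent is $rm<\alpha$), and one must track the interaction between this
splitting and the polynomial growth factor $(1+|v|)^{sl}$ at infinity, which is where the
precise threshold $q \ge (sl+rm)\tfrac{\alpha}{\alpha-rm}$ enters via a final application of
Hölder's inequality balancing the growth against the moment. Since the statement is cited from
\cite[Proposition~3.1]{we}, I would in practice carry out this argument only in enough detail
to convince the reader and refer to \cite{we} for the remaining estimates.
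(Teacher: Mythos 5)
Two preliminary remarks: the paper itself does not prove this lemma (it is quoted verbatim from \cite[Proposition~3.1]{we}), so the comparison below is with what a complete argument requires; and your lower bound is essentially fine: fix an interval $[-R_0,R_0]$ with $\mu[-R_0,R_0]\ge 1/2$, note that on it $|v-b_i|\le C(1+|b_i|)$ automatically, and use {\bf Reg1)} to excise $c$-neighbourhoods of the at most $l+m$ relevant points so that $|v-a_j|\ge c(1+|a_j|)$ on a set of $\mu$-measure $\ge 1/4$; no integrability and no moment hypothesis is needed there (if the integral diverges the lower bound is trivial).

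The genuine gap is in the upper bound. Your reduction to showing $\int (1+|v|)^{sl}\prod_i\bigl((1+|b_i|)/|v-b_i|\bigr)^r\,d\mu(v)\le C$ uniformly in the $b_i$ is correct, but the splitting you then propose does not prove it: on $\{|v-b_i|\ge 1\}$ the factor $\bigl((1+|b_i|)/|v-b_i|\bigr)^r$ is only bounded by $(1+|b_i|)^r$, which is \emph{not} ``a bounded factor'', and on $\{|v-b_i|\le 1\}$ the estimate $\int_{|v-b_i|\le 1}|v-b_i|^{-r'}d\mu<\infty$ from {\bf Reg1)} again leaves the unbounded prefactor $(1+|b_i|)^{r}(1+|v|)^{sl}\asymp(1+|b_i|)^{r+sl}$ untouched; checking $sl\le q$ is far from sufficient. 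The missing idea, which is exactly where the exponent $q\ge (sl+rm)\frac{\alpha}{\alpha-rm}$ is used (and it is sharp), is the interplay between the $q$-moment and $\alpha$-regularity near \emph{far-away} poles: if $|v-b_i|\le (1+|b_i|)/2$ then $1+|v|\asymp 1+|b_i|$, so by Chebyshev the set $E$ where this happens has $\mu(E)\le C M_q(1+|b_i|)^{-q}$, and the layer-cake optimisation $\int_E |v-b|^{-\gamma}\,d\mu\le C\,\mu(E)^{1-\gamma/\alpha}$ (valid for $\gamma<\alpha$, here $\gamma=rm$ after bounding $\prod_i|v-b_i|^{-r}\le\sum_i|v-b_i|^{-rm}$ or by H\"older with equal exponents) gives a contribution $\lesssim (1+|b_i|)^{sl+rm}\mu(E)^{(\alpha-rm)/\alpha}\lesssim (1+|b_i|)^{sl+rm-q(\alpha-rm)/\alpha}\le C$ precisely under the stated hypothesis; the complementary region, where $|v-b_i|\ge(1+|b_i|)/2$ for the large $|b_i|$, is where the ratio really is $O(1)$. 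Without this Chebyshev-plus-regularity interpolation the uniform bound fails: one can construct $\alpha$-regular $\mu$ with bumps of mass $\sim t^{-q}$ and width $\sim(\text{mass})^{1/\alpha}$ centred at points $t\to\infty$ for which your unit-scale splitting gives nothing, and which show the threshold is attained. Your closing sentence about ``H\"older balancing the growth against the moment'' gestures at this, but note that a single naive H\"older with exponents $\bigl(\alpha/(rm),\ \alpha/(\alpha-rm)\bigr)$ just misses the endpoint $q=(sl+rm)\frac{\alpha}{\alpha-rm}$, whereas the argument above covers it; so this step needs to be carried out explicitly rather than deferred.
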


The following Wegner-type estimates are a restatement of those in \cite[Appendix II]{AM}:

\begin{lemma}\label{l:weg} Let $A$ be a random self-adjoint operator acting on $\ell^2(\Lam)$, and let
$x, y \in \Lam$.
\begin{enumerate}
\item If $A(x,x)$ is sampled from a measure $\mu$ obeying {\bf Reg1)} independently
of all the other entries of $A$, then $\mathbb{E} |G_z[A](x,x)|^s < C_s < \infty$ for any $s < \alpha$,
uniformly in $z \notin \mathbb{R}$.
\item If both $A(x,x)$ and $A(y,y)$ are sampled from a measure $\mu$ obeying {\bf Reg1)} independently
of each other and of the other entries of $A$, then also $\mathbb{E} |G_z[A](x, y)|^s < C_s < \infty$.
\end{enumerate}
\end{lemma}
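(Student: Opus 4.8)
The statement to prove is Lemma~\ref{l:weg}, the two Wegner-type fractional moment bounds following the Aizenman--Molchanov scheme.

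\medskip

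The plan is to reduce everything to the dependence of a single diagonal resolvent entry, or a $2\times 2$ block, on the randomness at the relevant site(s), and then integrate using the regularity assumption \textbf{Reg1)}. For part (1), I would fix all entries of $A$ other than $A(x,x)$, write $A = A_0 + v\, |\delta_x\rangle\langle\delta_x|$ where $v = A(x,x)$ is distributed according to $\mu$ and $A_0$ is the (now deterministic, conditionally on the other entries) operator with the $(x,x)$ entry removed. The key algebraic fact is the rank-one (Krein/Aronszajn) resolvent formula: by the Schur--Banachiewicz formula \eqref{eq:SB} applied with $X = \{x\}$,
\[
G_z[A](x,x) = \frac{1}{\,v + F(z)\,}~, \qquad F(z) := (A_0 - z)\big|_{\{x\}} = -\,P_{\{x\}} A_0 P_{\{x\}^c}^* \tfrac{1}{P_{\{x\}^c} A_0 P_{\{x\}^c}^*} P_{\{x\}^c} A_0 P_{\{x\}}^* + (\text{diagonal of }A_0)~,
\]
so that, conditionally on everything except $v$, the quantity $F(z)$ is a fixed complex number. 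Hence
\[
\mathbb{E}\big[\,|G_z[A](x,x)|^s \mid \mathcal{F}_{\neq x}\big] = \int \frac{d\mu(v)}{|v + F(z)|^s}~.
\]
Now I would invoke the elementary bound that for any $w\in\mathbb{C}$ and any $0<s<\alpha$, $\int |v+w|^{-s}\,d\mu(v) \le C_s$ uniformly in $w$: indeed, splitting the integral over the dyadic shells $\{|v+w|\in[2^{-k-1},2^{-k}]\}$, the $\mu$-mass of each shell is at most $C\,2^{-k\alpha}$ by \textbf{Reg1)} (two intervals of length $2^{-k}$), and $\sum_k 2^{ks}\cdot 2^{-k\alpha} < \infty$ exactly when $s<\alpha$. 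This is, essentially, the estimate of \cite[Appendix II]{AM}, and taking expectations over $\mathcal{F}_{\neq x}$ finishes part (1).

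\medskip

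For part (2), when both $A(x,x)$ and $A(y,y)$ are independently $\mu$-distributed, I would condition on all entries except these two. The $2\times 2$ principal block indexed by $\{x,y\}$ now has the form $D + B$, where $D = \mathrm{diag}(v_x, v_y)$ carries the two independent random diagonal entries and $B$ is a fixed $2\times2$ complex (non-self-adjoint, because of the $i\epsilon$ and the Schur complement) matrix. By \eqref{eq:SB} with $X = \{x,y\}$,
\[
\begin{pmatrix} G_z[A](x,x) & G_z[A](x,y) \\ G_z[A](y,x) & G_z[A](y,y)\end{pmatrix} = (D + B)^{-1}~,
\]
so $G_z[A](x,y)$ equals $-b_{12}\big/\big((v_x+b_{11})(v_y+b_{22}) - b_{12}b_{21}\big)$ by Cramer's rule, i.e.\ it is a rational function of the two variables $(v_x,v_y)$ of the type handled by the decoupling inequality Lemma~\ref{l:decoup} (or, more directly, by a two-fold iteration of the one-dimensional estimate above). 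Concretely: integrate first over $v_x$ using part (1)'s bound applied to the rank-one perturbation at $x$, which yields $\mathbb{E}_{v_x}|G_z[A](x,y)|^s \le C_s\,|G_z[A_{(x\to\infty)}](x,y)|^s$-type control — more cleanly, one uses the identity $G_z[A](x,y) = G_z[A](x,x)\cdot(\text{something independent of }v_x)$ coming from the first-row relation of $(D+B)^{-1}$, reducing the $v_x$-integral to exactly the diagonal integral of part (1); then integrate over $v_y$ the same way using the second column. Either route gives $\mathbb{E}|G_z[A](x,y)|^s \le C_s < \infty$ for $s<\alpha$, uniformly in $z\notin\mathbb{R}$, and the whole argument is insensitive to whether $x\sim y$ or not.

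\medskip

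I expect the only genuine subtlety — the ``main obstacle'' — to be bookkeeping rather than mathematics: making precise that after conditioning on $\mathcal{F}_{\neq x}$ (resp.\ $\mathcal{F}_{\neq\{x,y\}}$) the Schur complement $F(z)$ (resp.\ the matrix $B$) is indeed a well-defined finite complex number, which requires that $P_{\{x\}^c} A_0 P_{\{x\}^c}^*$ be invertible; this is automatic when $z\notin\mathbb{R}$ because the imaginary part of $A$ is $\epsilon\,\mathbb{1}\ne0$ on the whole space, so every principal submatrix of $A-z$ is invertible. One should also note that the constant $C_s$ depends only on $s$, $\alpha$, and the constant $C$ in \textbf{Reg1)} — not on $A_0$, $B$, or $z$ — which is exactly what the dyadic-shell estimate delivers. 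The finite-moment assumption \textbf{Reg2)} is not needed for this lemma in the stated form (it enters only through Lemma~\ref{l:decoup} when one wants the full two-sided $\asymp$ with numerators present); here we only use the upper bound, so \textbf{Reg1)} alone suffices.
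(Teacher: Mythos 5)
Your part (1) is correct, and it is the standard rank-one (Krein/Schur-complement) argument behind the estimate that the paper itself does not prove but simply quotes from [AM, Appendix~II]: conditioning on the environment, $G_z[A](x,x)=(v+F(z))^{-1}$ with $F(z)$ fixed, and the dyadic-shell integration under {\bf Reg1)} gives a constant depending only on $s,\alpha$ and the constant in {\bf Reg1)}. That part needs no change (apart from the cosmetic point that $F(z)$ should be $A_0(x,x)-z$ minus the Schur complement of $A_0-z$, not of $A_0$).

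Part (2), however, has a genuine gap. Both routes you describe --- Cramer's rule followed by iterated one-variable integration, and the identity $G_z[A](x,y)=G_z[A](x,x)\cdot\bigl(-b_{12}/(v_y+b_{22})\bigr)$ --- yield the bound $\mathbb{E}|G_z[A](x,y)|^s\le C_s^2\,|b_{12}|^s$, and the factor $|b_{12}|^s$ does not go away: $b_{12}$ is the off-diagonal entry of the conditioned $2\times2$ matrix $B$, i.e.\ the bare hopping \emph{plus} the Schur-complement self-energy through the environment, and it is not uniformly bounded. For instance, if $x$ and $y$ are both coupled to a single site $u$ with $A_0(u,u)$ close to $\Re z$, then $b_{12}\approx -(A_0(u,u)-z)^{-1}$, which blows up as $\Im z\to0$; yet uniformity of $C_s$ in $z$ and in the conditioned data is exactly what the lemma claims and what the paper uses (e.g.\ in \eqref{eq:weg0} and in the proof of Theorem~\ref{thm:loc2}, where the constant must not depend on the box or the background). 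No iteration of the one-variable estimate can remove this factor; a genuinely two-variable step is required. The standard repair, which is in essence what [AM, Appendix~II] does: use the determinant identity for the $2\times2$ block, $G(x,x)G(y,y)-G(x,y)G(y,x)=\det K=\bigl[(v_x+b_{11})(v_y+b_{22})-b_{12}b_{21}\bigr]^{-1}$. For the real symmetric operators considered here $G(x,y)=G(y,x)$, so $|G(x,y)|^{s}\le |G(x,x)|^{s/2}|G(y,y)|^{s/2}+|\det(D+B)|^{-s/2}$; the first term is controlled by part (1) and Cauchy--Schwarz, while $\mathbb{E}|\det(D+B)|^{-s/2}$ \emph{is} amenable to your iterated integration (the determinant is linear in $v_x$ with slope $v_y+b_{22}$, leaving $|v_y+b_{22}|^{-s/2}$ to be integrated in $v_y$), with no dangling numerator. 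For a general self-adjoint $A$ one instead invokes the two-variable weak-type bound for $2\times2$ matrices with sign-definite imaginary part, using $\Im(D+B)\le-\Im z$. Your closing remark that {\bf Reg2)} is not needed is true for part (1) and for the repaired argument, but as written your part (2) does not establish the stated uniform bound.
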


\subsection{Fractional moments: decay of the resolvent}

It is convenient to express the decay of the off-diagonal
elements of the resolvent, and, more generally, of a kernel $A: X \times X \to \mathbb{C}$,
in terms of the following quantity $\chi$, which was introduced by Aizenman \cite{Aiz}, and which
quantifies the exponential decay of a kernel with respect to a metric. If $\rho$ is a metric on  $X\subset \Lam$,
set
\[ \chi_\rho(A) = \sup_{x \in X} \sum_{y \in X} e^{\rho(y, x)} |A(y, x)|~.\]
The expression $|A|^s$ will denote a point-wise power of the point-wise absolute value of the kernel
$A$, thus
\[ \chi_\rho(|A|^s) = \sup_{x \in X} \sum_{y \in X} e^{\rho(y, x)} |A(y, x)|^s~.\]
We denote
\[ \| \rho \| = \sup_{x \sim y} \rho(x, y)\]
 and assume (here and forth) that this quantity is finite.

The resolvent identity (\ref{eq:resolv}) implies the bounds
\begin{equation}\label{eq:res.chi*}
\chi_\rho(P_{X^c} G_z[A] P_{X^c}^*) \leq \kappa^2 e^{2\|\rho\|} \chi_\rho^2(G_z[A_X])
\chi_\rho(P_X G_z[A] P_X^*)
\end{equation}
and
\begin{equation}\label{eq:res.chi}
\chi_{\rho}(G_z[A]) \leq \kappa e^{\|\rho\|} \chi_{\rho}(G_z[A_X]) (1 + \kappa e^{\|\rho\|} \chi_\rho(G_z[A_X])
  \chi_\rho(P_XG_z[A]P_X^*)~.
\end{equation}

The next statement is a translation of \cite[Lemma~2.1]{AM} to the $\chi$-notation of \cite{Aiz}. We now set $X = \Lam$, and let $A^\od$ denote the off-diagonal part of a kernel $A$.

\begin{lemma}[Aizenman--Molchanov]\label{l:am}
Let $A$ be an operator acting on $\ell^2(\Lam)$, and let $V: \Lam \to \mathbb{R}$ be an in\-de\-pendent
identically distributed random potential satisfying the decoupling inequality
\begin{equation}\label{eq:decoup} \mathbb{E} \frac{|V(y) - a|^s}{|V(y) - b|^s}
\geq C_s^{-1} \mathbb{E} \frac{1}{|V(y)-b|^s}~, \quad a,b \in \mathbb{C}~, \, b \notin \mathbb{R}~.
\end{equation}
for some $0 < s < 1$ and $C_s > 0$. Let $\rho$ be a metric on $\Lam$ such that $\chi_\rho(|A^\od|^s)$ is finite.
Then, for
\[ g^s > C_s \chi_\rho(|A^\od|^s)~, \]
one has
\[ \chi_\rho(\mathbb{E} |G_z[A + gV]|^s) \leq \frac{C_s}{g^s - C_s \chi_\rho(|A|^s)}~.\]
\end{lemma}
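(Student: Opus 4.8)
The plan is to derive a self-consistent inequality for the quantity $\chi_\rho(\mathbb{E}|G_z[A+gV]|^s)$ by applying the resolvent identity in its ``rank-one'' form at a single site and then using the decoupling inequality (\ref{eq:decoup}) to trade a ratio of fractional moments for a single fractional moment of the Green function. Write $H = A + gV$ and fix $z \notin \mathbb{R}$. For a fixed site $w \in \Lam$, the first step is to express $G_z[H](x,w)$ in terms of $G_z[H](x, u')$ for neighbours $u'$ of $w$ and the ``diagonal'' quantity $G_z[H](w,w)$. The cleanest route is the standard algebraic identity coming from the row of $(H-z)$ at site $w$: for $x \neq w$,
\[
(A(w,w) + gV(w) - z)\, G_z[H](x,w) = \delta_{x,w} - \sum_{u' \sim w} A(w,u')\, G_z[H](x,u') - \sum_{u' : u' \neq w} A(w,u')\,G_z[H](x,u')\big|_{\text{far terms}}~,
\]
which, after isolating $gV(w)$ in the denominator, gives $|G_z[H](x,w)|$ bounded by $|gV(w) - b|^{-1}$ times a sum of $|A(w,u')|\,|G_z[H](x,u')|$ over the off-diagonal neighbours, where $b = z - A(w,w)$. (This is precisely the content of Lemma~2.1 of \cite{AM}; I would cite the identity there rather than re-derive it, and merely rewrite it in the $\chi$-notation.) Raising to the power $s < 1$ and using subadditivity $(\sum a_i)^s \le \sum a_i^s$ keeps the right-hand side linear in the $|G_z[H]|^s$ terms.

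The second step is to take the expectation. Here the decoupling inequality (\ref{eq:decoup}) is applied with respect to the variable $V(w)$: conditioning on all other randomness, $G_z[H](x,u')$ for $u' \ne w$ depends on $V(w)$ only through the entry in row/column $w$, and the key point is that $\mathbb{E}\big[ |gV(w) - b|^{-s} \, |G_z[H](x,u')|^s \big]$ can be bounded above by $C_s \, \mathbb{E}\big[ |G_z[H](x,u')|^s \big]$ after first using the rank-one dependence to write $G_z[H](x,u')$ as a fractional-linear function of $V(w)$ and then invoking (\ref{eq:decoup}) in the direction that replaces $\mathbb{E}|V(w)-b|^{-s}(\cdots)$ by a constant times $\mathbb{E}(\cdots)$ with the singularity removed. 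Summing over $w$ against the weight $e^{\rho(w,x)}$ and reorganising the double sum over $(w, u')$ with $u' \sim w$ — using $\rho(w,x) \le \rho(u',x) + \|\rho\|$ and $\sum_{w \sim u'} |A(w,u')| \le \chi_\rho(|A^\od|)$-type bounds, or more precisely $\sum_w e^{\rho(w,x)}|A(w,u')|^s \le e^{\|\rho\|}\,$ times the relevant $\chi$ — yields
\[
\chi_\rho\big(\mathbb{E}|G_z[H]|^s\big) \;\le\; \frac{C_s}{g^s}\,\Big( 1 + \chi_\rho(|A^\od|^s)\,\chi_\rho\big(\mathbb{E}|G_z[H]|^s\big)\Big)~,
\]
up to adjusting the metric-dependent constants into $C_s$. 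Solving this for $\chi_\rho(\mathbb{E}|G_z[H]|^s)$ under the hypothesis $g^s > C_s\,\chi_\rho(|A^\od|^s)$ gives exactly the claimed bound $C_s/(g^s - C_s\chi_\rho(|A|^s))$.

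I expect the main obstacle to be the bookkeeping in the expectation step: one must be careful that after fixing a row index $w$, the Green function $G_z[H](x,u')$ genuinely has rank-one dependence on $V(w)$ of the fractional-linear type to which (\ref{eq:decoup}) applies, and that the constant produced is uniform in $z \notin \mathbb{R}$ and in all the other entries; the finiteness of $\chi_\rho(|A^\od|^s)$ is what guarantees the geometric rearrangement of the double sum converges. A secondary subtlety is the passage from $\chi_\rho(|A^\od|^s)$ in the smallness condition to $\chi_\rho(|A|^s)$ in the denominator of the conclusion: since $|A|^s \ge |A^\od|^s$ pointwise (the diagonal only adds nonnegative terms), the stated bound is the weaker, cleaner one, and one simply absorbs the diagonal contribution using Lemma~\ref{l:weg} or, more directly, notes that the diagonal term $|gV(w)-b|^{-s}$ was already handled by the decoupling step. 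None of this requires new ideas beyond those in \cite{AM}; the work is purely in translating their Lemma~2.1 into the $\chi$-formalism of \cite{Aiz}.
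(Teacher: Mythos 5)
Your overall skeleton (row identity of $A+gV-z$ at a pivot site, a decoupling step in that site's potential, weighting by $e^{\rho(x,y)}$ and summing to obtain a self-consistent inequality for $\chi_\rho(\mathbb{E}|G_z[A+gV]|^s)$, then solving it) is the same as the paper's, but the expectation step as you describe it does not go through with the hypothesis \eqref{eq:decoup} alone. You divide the row identity at $w$ by the diagonal entry $gV(w)+A(w,w)-z$, so you must then bound $\mathbb{E}\left[ |gV(w)+A(w,w)-z|^{-s}\, |G_z[H](x,u')|^s \right]$ for $u'\neq w$ by $C_s g^{-s}\,\mathbb{E}|G_z[H](x,u')|^s$. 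Conditionally on all other randomness, $G_z[H](x,u')$ is a genuinely fractional-linear function of $V(w)$ (affine part plus pole), and its pole sits at a point determined by the Schur complement at $w$ (equivalently at $-1/(g\,G_0(w,w))$, with $G_0$ the resolvent of the operator with $V(w)$ set to $0$), which is in general different from the pole $(z-A(w,w))/g$ of the explicit weight. Hence the conditional integral you need is of the two-denominator type $\int |V-a|^s\,|V-b|^{-s}\,|V-\tilde b|^{-s}\,d\mu \le C \int |V-a|^s\,|V-b|^{-s}\,d\mu$, which is strictly stronger than \eqref{eq:decoup} (it is essentially the $m=2$ case of Lemma~\ref{l:decoup} and requires extra conditions on $s,\alpha,q$ that the present lemma does not assume). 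So the step ``invoke \eqref{eq:decoup} to remove the singularity from the right-hand-side terms'' would fail as stated; you flagged this as a subtlety but did not resolve it.

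The paper (following Lemma~2.1 of \cite{AM}, which you propose to cite) avoids the problem by never dividing: it keeps the product $|G_z[H](x,w)|^s\, |gV(w)+A(w,w)-z|^s$ on the left-hand side. The crucial point is that the Green function whose \emph{second argument is the pivot site} has the pure-pole form $G_z[H](x,w)=a\,(V(w)-b)^{-1}$ with $a,b$ independent of $V(w)$ (no affine part), so \eqref{eq:decoup} applies verbatim as a lower bound, $\mathbb{E}\left[ |G_z[H](x,w)|^s |gV(w)+A(w,w)-z|^s \right] \ge C_s^{-1} g^s\, \mathbb{E}|G_z[H](x,w)|^s$, while the right-hand side $\sum_{u\neq w}\mathbb{E}|G_z[H](x,u)|^s|A(u,w)|^s+\delta(x-w)$ carries no $V(w)$-weight at all and is left untouched. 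With this rearrangement your remaining bookkeeping (subadditivity of $t\mapsto t^s$, the factor $\chi_\rho(|A^\od|^s)$ after summing against $e^{\rho}$, solving the linear inequality) coincides with the paper's proof. Your side remark about $|A|$ versus $A^\od$ in the final denominator is correct and immaterial: the argument yields the bound with $\chi_\rho(|A^\od|^s)$, which implies the stated, weaker form whenever that denominator is positive.
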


\begin{proof}[Proof of Lemma~\ref{l:am}]
Let $x, y \in \Lam$. According to the definition of the resolvent, $G_z[A+gV](gV+A-z\mathbb{1}) = \mathbb{1}$, which can be written as
\[G_z[A+gV](x, y)(gV(y) + A(y, y)) = - \sum_{u \neq y} G_z[A+gV](x, u) A(u, y) + \delta(x-y)~. \]
Taking expectation of the $s$-moment and applying the inequality $|a+b|^s \leq |a|^s + |b|^s$, we obtain:
\begin{multline*}\mathbb{E} |G_z[A+gV](x, y)|^s |gV(y) + A(y, y)|^s
 \\\leq \sum_{u \neq y} \mathbb{E} |G_z[A+gV](x, u)|^s |A(u, y)|^s + \delta(x-y)~.
\end{multline*}
As a function of $V(y)$, the expression $G_z[A+gV](x, y)$ has the form
\[ G_z[A+gV](x, y) = a (V(y)-b)^{-1}~, \]
where $a,b$ may be random but do not depend on $V(y)$.
Therefore the decoupling estimate (\ref{eq:decoup}) yields the inequality
\[ \mathbb{E} |G_z[A+gV](x, y)|^s |gV(y)+A(y,y)|^s \geq C_s^{-1} g^s \mathbb{E} |G_z[A+gV](x, y)|^s~,\]
which implies
\[ \mathbb{E} |G_z[A+gV](x, y)|^s
  \leq \frac{C_s}{g^s} \left\{ \sum_{u \neq y} \mathbb{E} |G_z[A+gV](x, u)|^s |A(u, y)|^s + \delta(x-y) \right\}~. \]
Multiplying both sides by $e^{\rho(x, y)}$ and summing over $y \in \Lam$, we obtain:
\[ \chi_\rho(\mathbb{E} |G_z[A+gV]|^s)
  \leq \frac{C_s}{g^s} \Big\{ \chi_\rho(\mathbb{E} |G_z[A+gV]|^s) \, \chi_\rho(|A^\od|^s) + 1 \Big\}~.
\]
\end{proof}

\begin{rmk} Lemma~\ref{l:decoup} provides examples of distributions
satisfying (\ref{eq:decoup}), here and in Theorem~\ref{thm:aiz} below.
\end{rmk}

\section{Wegner estimates}\label{s:weg}

In this section we prove two Wegner-type estimates.

\subsection{First Wegner estimate}\label{s:weg1}

We start from a general property of discrete Schr\"odinger operators, cf.\ Bourgain and Klein \cite[\S 2.2]{BK}.

 \begin{lemma}\label{l:bk} Let $B \subset \mathbb{Z}^d$ be a finite box, and let $\partial_\text{in} B \subset B' \subset B$.
 If $\psi$ is an eigenvector of a random
Schr\"odinger operator $-\Delta|_B + U$ on $B$ with eigenvalue $\lambda$, and $x \in B$ is a site
with first coordinate
\[ x_1= \max_{y \in B} y_1- n~, \]
then
\[ |\psi(x)| \leq \sum_{y \in B'} |\psi(y)|
\sum_{S \in \mathfrak{S}_{xy}}   \prod_{u \in S} |U(u) + 2d  - \lambda|~,\]
where
\[ \mathfrak{S}_{xy} \subset \left\{ u \in B \, \mid \, u_1 > x_1 \right\}~, \quad
\sum_{y} \# \mathfrak{S}_{xy} \leq (2d)^n~, \]
every $S \in \mathfrak{S}_{xy}$ is of cardinality $\# S \leq n$, and $\# S \cap B' \leq 1$.
\end{lemma}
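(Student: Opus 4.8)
The plan is to argue by induction on $n$, peeling off one ``layer'' of the box $B$ at a time, starting from the hyperplane $\{y_1 = \max_{y\in B} y_1\}$ and moving inward. The key identity is the eigenvalue equation $(-\Delta|_B + U)\psi = \lambda\psi$ rewritten at a boundary site: if $z \in B$ has $z_1 = \max_{y\in B} y_1 - k$ and $z \notin B'$, then $z$ has no neighbour in $B$ with larger first coordinate except possibly those that are themselves being tracked, so the equation at $z$ reads
\[
(U(z) + 2d - \lambda)\,\psi(z) = \sum_{w \sim z,\ w \in B} \psi(w)~,
\]
and among the neighbours $w$ on the right-hand side, all but those with $w_1 = z_1$ (which we handle by a sub-induction along the layer) have $w_1 < z_1$ or lie in $B'$. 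Thus $\psi(z)$ is expressed as a linear combination, with coefficient $U(z)+2d-\lambda$ attached to the step from $w$ to $z$, of values $\psi(w)$ at sites that are either strictly deeper (smaller first coordinate) or already in $B'$. Iterating this expansion $n$ times produces the claimed sum over ``paths'' recorded by the sets $S$: each $S$ collects the sites $u$ at which we applied the equation, each contributing a factor $|U(u) + 2d - \lambda|$; the constraint $\mathfrak{S}_{xy} \subset \{u_1 > x_1\}$ records that we only ever invoke the equation at sites strictly between the starting layer and $x$'s layer; $\#S \le n$ because each application moves us one layer inward (or along the current layer, which we bound separately); and $\#(S \cap B') \le 1$ because once the path reaches $B'$ we stop expanding.

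The bookkeeping of the branching factor is the second ingredient: at each site where we apply the eigenvalue equation, the vertex has at most $2d$ neighbours in $B$, so the number of paths multiplies by at most $2d$ at each step, giving $\sum_y \#\mathfrak{S}_{xy} \le (2d)^n$. I would set this up cleanly by defining, for each $0 \le k \le n$, the collection of admissible partial paths from $x$'s layer that have been expanded back through $k$ layers, together with the multilinear functional they define on $\{\psi(w)\}$, and then verify the inductive step: replacing a terminal site $w \notin B'$ with $w_1 > x_1$ by its neighbours via the equation above preserves all the stated structural constraints and multiplies the count by at most $2d$. The sites with $w_1 = x_1$ or $w_1 \le x_1$ that are still not in $B'$ require care — but by choosing to always expand the terminal site of largest first coordinate first, one ensures that when a path terminates it terminates either in $B'$ or at a site of $x$'s own layer; a short additional argument along that final layer (a one-dimensional version of the same expansion) reduces to $B'$, or one simply enlarges $\mathfrak{S}_{xy}$ to absorb it, which is consistent with $\#S \le n$ since the along-layer steps can be amortised into the layer budget.

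The main obstacle I anticipate is precisely this treatment of within-layer propagation: the naive layer-by-layer peeling expresses $\psi(z)$ in terms of neighbours including same-layer ones, so a single "step inward" is not literally a single application of the equation, and one must organise the expansion (e.g.\ by a secondary induction on the first coordinate restricted to a layer, or by a well-chosen linear order on $B$ refining the first-coordinate order) so that the final counting bound $(2d)^n$ and the cardinality bound $\#S \le n$ both survive. Once the right ordering is fixed, the inequality $|\psi(x)| \le \sum_{y \in B'} |\psi(y)| \sum_{S} \prod_{u\in S}|U(u)+2d-\lambda|$ follows by taking absolute values in the expansion and applying the triangle inequality, and the structural constraints on $\mathfrak{S}_{xy}$ are read off from the construction. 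I do not expect the positivity/reality of $U$ or $\lambda$ to play any role here — the statement is purely algebraic — so the lemma should hold verbatim for the complexified operators that appear later in the fractional-moment arguments.
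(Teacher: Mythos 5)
Your expansion runs in the wrong direction, and this is not a cosmetic issue: it inverts the coefficients. If you use the eigenvalue equation \emph{at} $z$ and solve for $\psi(z)$, as your displayed identity does, you get
\[
\psi(z)=\frac{1}{U(z)+2d-\lambda}\sum_{w\sim z,\ w\in B}\psi(w)~,
\]
so each application of the equation at a site $u$ contributes a factor $|U(u)+2d-\lambda|^{-1}$, a small denominator, rather than the factor $|U(u)+2d-\lambda|$ in the numerator that the lemma asserts (and that its use in Corollary~\ref{cor:weg1} requires, since inverse powers of $U+2d-\lambda$ cannot be controlled when taking fractional moments). The correct move, which is the paper's proof, is to apply the equation at the neighbour $x'=x+e_1$ --- which lies in $B$ precisely because $x\notin B'\supset\partial_\text{in}B$ --- and solve for $\psi(x)$, which enters that equation with unit coefficient:
\[
|\psi(x)|\le |U(x')+2d-\lambda|\,|\psi(x')|+\sum_{w\sim x',\,w\neq x}|\psi(w)|~.
\]
Every site on the right-hand side has first coordinate at least $x_1+1$, so the factor is attached to a site $u=x'$ with $u_1>x_1$, each step strictly decreases the level $n$, the branching is at most $2d$, and the claim follows by induction on $n$ (base case $x\in B'$) with
$\mathfrak{S}_{xy}=\left\{S\cup\{x'\}\mid S\in\mathfrak{S}_{x'y}\right\}\cup\bigcup_{w\sim x',\,w\neq x}\mathfrak{S}_{wy}$.

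Two further consequences of the directional mistake. First, your assertion that a site $z\notin B'$ ``has no neighbour in $B$ with larger first coordinate'' is false: $z\notin\partial_\text{in}B$ forces $z+e_1\in B$. Second, the within-layer/``deeper'' bookkeeping problem you flag at the end, and leave unresolved via a vague ordering or amortisation scheme, is an artifact of expanding at $z$ itself: if the expansion moves sideways or toward smaller first coordinate, you can neither maintain $\mathfrak{S}_{xy}\subset\{u\in B\mid u_1>x_1\}$ nor tie $\#S\le n$ and $\sum_y\#\mathfrak{S}_{xy}\le(2d)^n$ to $n=\max_{y\in B}y_1-x_1$, because the number of steps needed to reach $B'$ in that direction is unrelated to $n$. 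With the equation applied at $x+e_1$, no same-layer or deeper sites ever appear, so there is nothing to amortise. (You are right that reality of $U$ and $\lambda$ plays no role --- the argument is purely algebraic --- but as written your proposal does not prove the stated inequality.)
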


\begin{proof} With the convention that an empty product is equal to one, the estimate holds for
$n=0$ and, more generally, for $x \in B'$. If $x \notin B'$, we proceed
by induction. The  eigenvalue equation at $x' = x + e_1$ yields
\[
|\psi(x)|
\leq |2d + U(x') - \lambda| |\psi(x')| + \sum_{w \sim x', w \neq x} |\psi(w)|~, \]
whence the claim follows with
\[ \mathfrak{S}_{xy} = \left\{ S \cup \{ x' \} \, \mid \, S \in \mathfrak{S}_{x'y} \right\}
\cup \bigcup_{w \sim x', w \neq x} \mathfrak{S}_{wy}~. \]
\end{proof}

In the context of trimmed random Schr\"odinger operators, Lemma~\ref{l:bk}
implies:
\begin{cor}\label{cor:weg1}
Let $B \subset \mathbb{Z}^d$ be a finite box such that $\partial_\text{in}B \subset \Gamma$. Then, for sufficiently
small $s$, the restriction $H(g)|_B = P_B H(g) P_B^*$ of an operator $H(g)$ satisfying the assumptions
\Reg
admits the estimate
\[ \mathbb{E} \|G_z[H(g)|_B]\|^s \leq C(B,g)~, \quad z \notin \mathbb{R}~. \]
\end{cor}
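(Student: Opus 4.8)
The strategy is to reduce the estimate to a Wegner‑type probability bound, and then to prove the latter by feeding the Bourgain--Klein bound of Lemma~\ref{l:bk} into a one‑parameter spectral average. First I would make the reduction. Since $\sigma(H(g)|_B)\subset\mathbb R$, we have $\|G_z[H(g)|_B]\|=\dist(z,\sigma(H(g)|_B))^{-1}\le\dist(\operatorname{Re}z,\sigma(H(g)|_B))^{-1}$, and also $\|G_z[H(g)|_B]\|\le1$ whenever $|\operatorname{Im}z|\ge1$; hence $\mathbb E\|G_z[H(g)|_B]\|^{s}\le 1+\mathbb E\,\dist(\operatorname{Re}z,\sigma(H(g)|_B))^{-s}$, and by the layer‑cake formula it suffices to prove, uniformly in $E\in\mathbb R$,
\[
\mathbb P\bigl(\dist(E,\sigma(H(g)|_B))<\epsilon\bigr)\ \le\ C(B,g)\,\epsilon^{\beta}\,,\qquad 0<\epsilon<1,
\]
for some $\beta>0$; then any $0<s<\beta$ does the job.

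For the geometric input, abbreviate $\mathcal N=\max_{u\in\Gamma\cap B}|V(u)|$, so that $\|H(g)|_B\|\le 4d+\|V_0\|_{\ell^{\infty}(B)}+g\mathcal N$. Applying Lemma~\ref{l:bk} with $B'=\partial_\text{in}B$ to a normalised eigenvector $\psi$ of $H(g)|_B$ with eigenvalue $\lambda$, bounding each of the at most $\diam B$ factors $|U(u)+2d-\lambda|\le \|V_0\|_{\ell^{\infty}(B)}+g\mathcal N+2d+\|H(g)|_B\|$ by $C(B)(1+g\mathcal N)$, and using $\sum_y\#\mathfrak S_{xy}\le(2d)^{n}$, one gets $|\psi(x)|\le\bigl(2d\,C(B)(1+g\mathcal N)\bigr)^{\diam B}\max_{y\in\partial_\text{in}B}|\psi(y)|$ for every $x\in B$; squaring and summing over $x\in B$ gives $\|P_{\partial_\text{in}B}\psi\|^{2}\ge\delta$ with $\delta^{-1}\le C(B)(1+g\mathcal N)^{2\diam B}$. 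Consequently, writing $I=(E-\epsilon,E+\epsilon)$ and letting $\{\psi_j\}$ be an orthonormal eigenbasis, every eigenvalue of $H(g)|_B$ in $I$ contributes $1\le\delta^{-1}\|P_{\partial_\text{in}B}\psi_j\|^{2}$, so that pointwise $\tr\mathbf P_I[H(g)|_B]\le\delta^{-1}\sum_{u\in\partial_\text{in}B}\langle\delta_u,\mathbf P_I[H(g)|_B]\delta_u\rangle$. Since $\{\dist(E,\sigma(H(g)|_B))<\epsilon\}$ is precisely the event $\{\tr\mathbf P_I[H(g)|_B]\ge1\}$, splitting it according to whether $\mathcal N\le M$ or $\mathcal N>M$ and applying Markov's inequality to the first piece yields
\[
\mathbb P\bigl(\dist(E,\sigma(H(g)|_B))<\epsilon\bigr)\ \le\ C(B)(1+gM)^{2\diam B}\sum_{u\in\partial_\text{in}B}\mathbb E\,\langle\delta_u,\mathbf P_I[H(g)|_B]\delta_u\rangle\ +\ \mathbb P(\mathcal N>M).
\]

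It remains to estimate the two pieces. For $u\in\partial_\text{in}B\subset\Gamma$ the $(u,u)$‑entry of $H(g)|_B$ equals $V_0(u)+2d+gV(u)$, an affine image of a $\mu$‑distributed variable independent of all other entries; writing $A_u$ for $H(g)|_B$ with this entry reset to $V_0(u)+2d$, the rank‑one (Krein) formula gives $\langle\delta_u,G_{E+i\epsilon}[H(g)|_B]\delta_u\rangle=F/(1+gV(u)F)$ with $F=\langle\delta_u,G_{E+i\epsilon}[A_u]\delta_u\rangle$. Combining this with the operator inequality $\mathbf P_I\le 4\epsilon\operatorname{Im}G_{E+i\epsilon}$, the bound $|F|^{2}\le\epsilon^{-1}\operatorname{Im}F$, and the $\alpha$‑regularity {\bf Reg1)} of the law of $gV(u)$ (we may assume $\alpha\le1$), one obtains $\mathbb E\,\langle\delta_u,\mathbf P_I[H(g)|_B]\delta_u\rangle\le C(g)\,\epsilon^{\alpha}$, uniformly in $E$ and in the remaining randomness --- this is the standard Wegner estimate for H\"older‑continuous single‑site laws, cf.\ \cite[Appendix~II]{AM}. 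For the tail, {\bf Reg2)} and Chebyshev give $\mathbb P(\mathcal N>M)\le\#(\Gamma\cap B)\,M_q\,M^{-q}$. Inserting both and choosing $M=\epsilon^{-\alpha/(4\diam B)}$ balances the two contributions and gives $\mathbb P(\dist(E,\sigma(H(g)|_B))<\epsilon)\le C(B,g)\,\epsilon^{\beta}$ with $\beta=\min\{\alpha/2,\ \alpha q/(4\diam B)\}$; the constant depends only on $B,g,d,\|V_0\|_{\ell^{\infty}(B)}$, the regularity constant of $\mu$, and $M_q$. This is the Wegner bound displayed above, and the proof is complete.

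The step I expect to be the main obstacle is the estimate of the expectation in the last display of the second paragraph, under the very weak moment hypothesis {\bf Reg2)}: the loss factor $\delta^{-1}$ coming from Lemma~\ref{l:bk} grows like a fixed power of $\mathcal N$, so $\mathbb E[\delta^{-1}\,\tr\mathbf P_I]$ need not even be finite when $q$ is small; the remedy is to split the \emph{event} $\{\dist<\epsilon\}$ --- not the expectation --- at a level $M=M(\epsilon)$ chosen to balance the polynomial‑in‑$M$ Wegner term against the $M^{-q}$ tail. The remaining ingredients --- the reduction to the Wegner bound, the rank‑one spectral average, and the use of the hypothesis $\partial_\text{in}B\subset\Gamma$ to make the boundary sites carry randomness --- are routine.
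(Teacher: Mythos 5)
Your argument is correct, and it shares with the paper the essential mechanism: Lemma~\ref{l:bk} transfers mass of any eigenfunction of $H(g)|_B$ to $\partial_\text{in}B\subset\Gamma$, where the diagonal entries carry independent $\alpha$-regular randomness, at the price of an amplification factor polynomial in the potential which must then be tamed using the weak moment assumption {\bf Reg2)}. What differs is the probabilistic processing. The paper stays entirely inside the fractional-moment framework: it bounds $\Im \tr G_w[H(g)|_B]$ pointwise in the randomness by the boundary block of the Green function times the random amplification factor, and then takes $s$-th moments directly, using the a priori bound of Lemma~\ref{l:weg} (with Lemma~\ref{l:decoup}) for the boundary diagonal entries and small $s$ together with the finite $q$-moment to integrate the amplification factor; no eigenvalue counting or explicit Wegner exponent appears. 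You instead convert the claim, via the layer-cake formula, into a genuine Wegner bound $\mathbb{P}\{\dist(E,\sigma(H(g)|_B))<\epsilon\}\le C\epsilon^{\beta}$, and prove it by eigenvalue counting: the quantitative observability $\delta^{-1}\le C(B)(1+g\mathcal{N})^{2\diam B}$ controls $\tr\mathbf{P}_I$ by boundary spectral projections, spectral averaging at the sites of $\partial_\text{in}B\subset\Gamma$ gives $\mathbb{E}\langle\delta_u,\mathbf{P}_I\delta_u\rangle\le C(g)\epsilon^{\alpha}$, and the unboundedness of $\mathcal{N}$ is handled by truncating the event at a level $M(\epsilon)$ optimised against the $M^{-q}$ tail --- a correct alternative to the paper's device of taking the fractional power $s$ small. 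Your route is somewhat longer but yields an explicit exponent $\beta=\min\{\alpha/2,\,\alpha q/(4\diam B)\}$ and hence an explicit admissible range of $s$ (depending on $\diam B$, which is also implicit in the paper's proof and harmless for the application in Theorem~\ref{thm:loc2}, where the boxes have bounded diameter); the paper's route is shorter and meshes directly with the fractional-moment bookkeeping used later.
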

\begin{proof}
Let $\{ \psi \}_{\psi \in \Psi}$ be the eigenfunctions of $H(g)|_B$. Then, for every $x \in B$,
\[ 1 = \sum_{\psi \in \Psi} |\psi(x)|^2 \leq C_B' \sum_{y \in \partial_{\text{in}}B} |\psi(y)|^2 \sum_{S \in \mathfrak{S}_{xy}}   \prod_{u \in S} |U(u) + 2d  - \lambda|^2~,\]
where $U = V_0 + gV$,
 hence for $\Re w = \Re z = \lambda$
\begin{multline*}
\Im \tr G_w[H(g)|_B] \leq \\
C_B' \sum_{y \in \partial_{\text{in}}B} |\psi(y)|^2 \sum_{S \in \mathfrak{S}_{xy}}   \prod_{u \in S} |U(u) + 2d  - \lambda|^2 \Im \tr P_{\partial_\text{in}B} G_w[H(g)|_B] P_{\partial_\text{in}B}^*~.
\end{multline*}
The proof is concluded by taking the $s$-th moment with sufficiently small $s>0$
(note that it is sufficient to establish the bound for $\lambda$ in a compact
interval depending on $B$ and $g$).
 \end{proof}

Corollary~\ref{cor:weg1} yields the bound
\begin{equation}\label{eq:weg1}
\mathbb{E} |G_z[H(g)|_B](x, y)|^s \leq C(B, g)~.
\end{equation}
The constant $C(B, g)$ grows exponentially in the diameter of $B$, and as $1 + g^s$ in $g$. Note that, for $x, y \in \Gamma \cap B$, Lemma~\ref{l:weg} yields the better estimate
\begin{equation}\label{eq:weg0}
\mathbb{E} \left|G_z[H(g)|_B](x, y) \right|^s \leq C g^{-s}
\end{equation}
(the factor $g^{-s}$ is due to the normalisation which is different from that of
Lemma~\ref{l:weg}).

\subsection{Second Wegner estimate}\label{s:weg2}

The next deterministic lemma holds for any $H(g) = H(0)+gV$ with $V$ supported on $\Gamma$.

Let $B \subset \mathbb{Z}^d$ be a finite box, and let $H(g)|_B = P_B H(g) P_B^*$ be the restriction
of $H(g)$ to $B$. Denote by $\mult_\lambda$ the multiplicity of $\lambda$ in the spectrum of $H(0)|_B$,
and by $\gap_\lambda$ the distance from $\lambda$ to $\sigma(H(0)|_B) \setminus \{\lambda\}$.

\begin{lemma}\label{l:weg2}Suppose all the eigenvectors of $H(0)|_B$ corresponding to $\lambda$ are supported
on $B \setminus \Gamma$.  If $\phi$ is a normalised eigenvector of $H(g)|_B$ corresponding
to $\lambda'$, where $|\lambda - \lambda'| \leq \gap_\lambda / 3$ such that
$\phi \perp \operatorname{Ker} (H(0)|_B - \lambda)$,
then
\begin{equation}\label{eq:contrB'}
\|\phi|_{B\cap\Gamma}\| \geq \frac{\gap_\lambda}{3g\|V|_B\|_\infty}~.
\end{equation}
\end{lemma}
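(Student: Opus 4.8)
The plan is to exploit the spectral gap hypothesis to control the component of $\phi$ lying in the range of $\mathbf{P}_{\{\lambda\}}[H(0)|_B]$, and then use the fact that this range sits inside $\ell^2(B\setminus\Gamma)$ to transfer the estimate to $\phi|_{B\cap\Gamma}$. Write $P = \mathbf{P}_{\{\lambda\}}[H(0)|_B]$ for the spectral projection of the unperturbed restricted operator onto its $\lambda$-eigenspace, and $Q = \mathbbm{1} - P$. By hypothesis $\operatorname{Range}P \subset \ell^2(B\setminus\Gamma)$, equivalently $P_{B\cap\Gamma} P = 0$ and $P P_{B\cap\Gamma}^* = 0$; hence $P_{B\cap\Gamma}\phi = P_{B\cap\Gamma} Q\phi$, so it suffices to bound $\|Q\phi\|$ from below. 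Since $\phi\perp\operatorname{Ker}(H(0)|_B - \lambda) = \operatorname{Range}P$, we already have $\phi = Q\phi$, so in fact $P_{B\cap\Gamma}\phi = P_{B\cap\Gamma}\phi$ trivially and the content of the lemma is really a lower bound on $\|P_{B\cap\Gamma}\phi\|$, i.e.\ a statement that $\phi$ cannot be supported away from $B\cap\Gamma$.

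First I would write the eigenvalue equation for $\phi$ as $(H(0)|_B - \lambda')\phi = -gV|_B\,\phi$, and note that $V|_B$ is supported on $B\cap\Gamma$, so the right-hand side lies in $\ell^2(B\cap\Gamma)$ and has norm $\le g\|V|_B\|_\infty \|\phi|_{B\cap\Gamma}\|$. Next, apply $P$ to both sides: since $P$ commutes with $H(0)|_B$ and $(H(0)|_B - \lambda')P = (\lambda - \lambda')P$, the left-hand side becomes $(\lambda - \lambda')P\phi$. But $\phi\perp\operatorname{Range}P$ gives $P\phi = 0$, so applying $P$ yields no information; instead I apply $Q$, or rather work on $\operatorname{Range}Q$ where $H(0)|_B - \lambda'$ is invertible with norm of the inverse at most $1/(\gap_\lambda - |\lambda-\lambda'|) \le 1/(\gap_\lambda - \gap_\lambda/3) = 3/(2\gap_\lambda)$, because every eigenvalue of $H(0)|_B$ other than $\lambda$ is at distance $\ge \gap_\lambda$ from $\lambda$ and hence at distance $\ge \gap_\lambda - \gap_\lambda/3$ from $\lambda'$. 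Therefore, since $\phi = Q\phi$,
\[
\|\phi\| = \|Q\phi\| = \big\|(H(0)|_B - \lambda')^{-1}Q\,(-gV|_B\phi)\big\| \le \frac{3}{2\gap_\lambda}\, g\,\|V|_B\|_\infty\,\|\phi|_{B\cap\Gamma}\|.
\]
Using $\|\phi\| = 1$ and rearranging gives $\|\phi|_{B\cap\Gamma}\| \ge \frac{2\gap_\lambda}{3g\|V|_B\|_\infty} \ge \frac{\gap_\lambda}{3g\|V|_B\|_\infty}$, which is the claimed bound (with room to spare, consistent with the looser constant in \eqref{eq:contrB'}).

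The only point requiring care — and the main (minor) obstacle — is verifying that $Q\,V|_B\phi$ can be replaced by $V|_B\phi$ inside the resolvent bound, i.e.\ that the component $P\,(gV|_B\phi)$ does not spoil the estimate. This is handled by observing that applying $P$ to the eigenvalue equation gives $(\lambda-\lambda')P\phi = -gP V|_B\phi$, and the left side vanishes because $P\phi=0$; hence $P V|_B\phi = 0$ automatically (or, if $\lambda = \lambda'$, one argues directly that $V|_B\phi \in \operatorname{Range}Q$ is not needed since we only use $\|(H(0)|_B-\lambda')^{-1}Q\| \le 3/(2\gap_\lambda)$ applied to the full vector $-gV|_B\phi$, whose $Q$-component has norm at most that of the whole vector). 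Either way the estimate goes through, and the remaining steps are the routine norm manipulations above.
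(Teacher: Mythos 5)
Your proof is correct and rests on the same mechanism as the paper's: the eigenvalue equation $(H(0)|_B-\lambda')\phi=-gV|_B\phi$, the support of $V|_B$ on $B\cap\Gamma$, and the spectral-gap lower bound for $H(0)|_B$ on the orthogonal complement of $\operatorname{Ker}(H(0)|_B-\lambda)$; the paper just runs this by contradiction at the energy $\lambda$ (assuming \eqref{eq:contrB'} fails, the triangle inequality gives $\|(H(0)|_B-\lambda)\phi\|<\gap_\lambda$, contradicting $\phi\perp\operatorname{Ker}(H(0)|_B-\lambda)$), whereas you invert the reduced resolvent at $\lambda'$ on $\operatorname{Range}\mathbf{Q}_{\{\lambda\}}$. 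Your variant is fine (the worry about the $\mathbf{P}_{\{\lambda\}}$-component of $gV|_B\phi$ is indeed harmless) and even yields the slightly better constant $2\gap_\lambda/(3g\|V|_B\|_\infty)$.
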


\begin{proof}[Proof of Lemma~\ref{l:weg2}] If (\ref{eq:contrB'}) fails,
\[\begin{split}
\| (H(0)|_B - \lambda)\phi \|
&\leq  \|(\lambda - \lambda')\phi\| + \|(H(0)|_B-H(g)|_B)\phi\| \\
&\leq \gap_\lambda / 3 + \gap_\lambda/3 < \gap_\lambda~,
\end{split}\]
in contradiction with the assumption. Thus (\ref{eq:contrB'}) is proved.
\end{proof}

\begin{lemma}\label{l:weg2'} Assume that $H(g)$ satisfies \Reg,
and that all  the eigenvectors of $H(0)|_B$ corresponding to
$\lambda$ are supported on $B \setminus \Gamma$. Then, for all $\epsilon \leq \gap_\lambda / 3$ and sufficiently small $s>0$,
\[ \mathbb{P} \left\{ \text{$H(g)|_B$ has $>\mult_\lambda$ eigenvalues in $(\lambda-\epsilon, \lambda+\epsilon)$} \right\} \leq \frac{C \epsilon^s g^s}{\gap_\lambda^{2s}} \left(\# B\cap\Gamma\right)^2~. \]
\end{lemma}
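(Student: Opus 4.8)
The plan is to combine Lemma~\ref{l:weg2} with the eigenvalue-counting/Wegner machinery of Lemma~\ref{l:weg}. Suppose the event in question occurs, i.e.\ $H(g)|_B$ has strictly more than $\mult_\lambda$ eigenvalues in $(\lambda-\epsilon,\lambda+\epsilon)$. Since $\operatorname{Ker}(H(0)|_B-\lambda)$ has dimension $\mult_\lambda$, by a dimension count there must exist a normalised eigenvector $\phi$ of $H(g)|_B$ with eigenvalue $\lambda'\in(\lambda-\epsilon,\lambda+\epsilon)$ satisfying $\phi\perp\operatorname{Ker}(H(0)|_B-\lambda)$. Here $|\lambda-\lambda'|<\epsilon\le\gap_\lambda/3$, so Lemma~\ref{l:weg2} applies and gives
\[
\|\phi|_{B\cap\Gamma}\|^2 \geq \left(\frac{\gap_\lambda}{3g\|V|_B\|_\infty}\right)^2 .
\]
Thus on the bad event there is a spectral projection direction of $H(g)|_B$ near $\lambda$ carrying a definite amount of mass on the sublattice $\Gamma\cap B$, which is precisely the place where the random potential lives — the situation a Wegner estimate is designed to control.

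Next I would convert this into a resolvent lower bound. Let $w=\lambda+i\epsilon$. Writing $\phi=\sum_z \phi(z)\delta_z$ with eigenvalue $\lambda'$, $|\lambda-\lambda'|\le\epsilon$, we have $\langle\phi,\Im G_w[H(g)|_B]\phi\rangle = \epsilon/((\lambda-\lambda')^2+\epsilon^2)\ge 1/(2\epsilon)$. On the other hand $\Im G_w[H(g)|_B]\ge 0$, so for any subset $\Gamma\cap B$ we can compare against the full quadratic form; the point is to localise the mass onto $\Gamma\cap B$. Using $\|\phi|_{B\cap\Gamma}\|^2\ge(\gap_\lambda/(3g\|V\|_\infty))^2$ together with positivity of $\Im G_w$ and Cauchy--Schwarz one obtains that, on the bad event,
\[
\sum_{z\in\Gamma\cap B}\Im G_w[H(g)|_B](z,z)
 = \operatorname{Im}\operatorname{tr} P_{\Gamma\cap B} G_w[H(g)|_B] P_{\Gamma\cap B}^*
 \;\gtrsim\; \frac{1}{\epsilon}\cdot\frac{\gap_\lambda^2}{g^2\|V\|_\infty^2}.
\]
(More carefully: restrict the spectral measure of $H(g)|_B$ at the vector $\phi$ to the interval $(\lambda-\epsilon,\lambda+\epsilon)$; its contribution to $\langle\phi,\Im G_w\phi\rangle$ is at least $c/\epsilon$, and expanding $\phi$ in coordinates and using that the mass on $\Gamma\cap B$ is bounded below gives a lower bound on $\sum_{z\in\Gamma\cap B}\langle\delta_z,\Im G_w[H(g)|_B]\delta_z\rangle$ of the stated order, possibly after absorbing a factor $\#(B\cap\Gamma)$ from Cauchy--Schwarz.)

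Finally I would apply Markov's inequality and Lemma~\ref{l:weg}. Taking $s$-th powers ($0<s<1$ small, $s<\alpha$) and using $|a+b|^s\le|a|^s+|b|^s$,
\[
\mathbb{P}\{\text{bad event}\}
 \;\le\; \left(\frac{C\,\epsilon\, g^2\|V\|_\infty^2}{\gap_\lambda^2}\right)^{s}
   \mathbb{E}\sum_{z\in\Gamma\cap B}\big|\Im G_w[H(g)|_B](z,z)\big|^s .
\]
For each fixed $z\in\Gamma\cap B$, the diagonal entry $A(z,z)$ of $A=H(g)|_B$ contains the term $gV(z)$, and $V(z)$ is sampled from $\mu$ (obeying {\bf Reg1)}) independently of everything else; the rescaling $v\mapsto gv$ in Lemma~\ref{l:weg}(i) costs a factor $g^{-s}$, so $\mathbb{E}|G_w[H(g)|_B](z,z)|^s\le C g^{-s}$ uniformly in $w\notin\mathbb{R}$. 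Summing over $z\in\Gamma\cap B$ gives a factor $\#(B\cap\Gamma)$, and combining with the extra factor $\#(B\cap\Gamma)$ picked up in the previous step yields
\[
\mathbb{P}\{\text{bad event}\}\;\le\; \frac{C\,\epsilon^{s}\, g^{s}\,\|V\|_\infty^{2s}}{\gap_\lambda^{2s}}\,\big(\#(B\cap\Gamma)\big)^2 ,
\]
which is the claim (the $\|V\|_\infty^{2s}$ is absorbed into $C$; since $\mu$ need not be compactly supported one should interpret $\|V|_B\|_\infty$ via {\bf Reg2)} or replace the crude Cauchy--Schwarz step by a direct estimate of $\mathbb{E}$ of the relevant product of resolvent entries, which is where the main technical care is needed).

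The main obstacle I anticipate is the bookkeeping in the middle step: getting from "some eigenvector $\phi$ with $\|\phi|_{B\cap\Gamma}\|$ bounded below" to a clean lower bound on $\sum_{z\in\Gamma\cap B}\Im G_w(z,z)$ without losing control of the power of $\#(B\cap\Gamma)$, and handling the fact that $\|V\|_\infty$ is not bounded under {\bf Reg2)}. The cleanest route is probably to avoid passing through $\|V\|_\infty$ entirely: note that the vectors $\phi$ and $(H(0)|_B-\lambda)\phi = (\lambda'-\lambda)\phi + (H(0)|_B-H(g)|_B)\phi$ live in the range of $P_{B\cap\Gamma}$ up to an error of size $\le 2\epsilon$ (since $H(0)|_B - H(g)|_B = -gP_{B\cap\Gamma}VP_{B\cap\Gamma}^*$ is supported on $\Gamma$ and, using the spectral gap of $H(0)|_B$ at $\lambda$, the component of $(H(0)|_B-\lambda)\phi$ orthogonal to $\operatorname{Ker}(H(0)|_B-\lambda)$ has norm comparable to $\gap_\lambda\|\phi|_{(B\cap\Gamma)^\perp}\|$), which recovers (\ref{eq:contrB'}) intrinsically; then one works with $P_{B\cap\Gamma}\Im G_w[H(g)|_B]P_{B\cap\Gamma}^*$ directly and applies Lemma~\ref{l:weg}(i) site by site, paying exactly one factor $\#(B\cap\Gamma)$ from the sum and one from Cauchy--Schwarz in the localisation step.
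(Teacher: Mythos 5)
Your strategy coincides with the paper's (force a definite amount of mass on $B\cap\Gamma$ via Lemma~\ref{l:weg2}, convert this into a lower bound on $\Im \tr P_\Gamma G_{\lambda+i\epsilon}[H(g)|_B]P_\Gamma^*$, then use fractional moments, the rescaled diagonal Wegner bound $\mathbb{E}|G(x,x)|^s\le Cg^{-s}$ at sites of $\Gamma$, and Markov), but there are two genuine gaps. First, the opening dimension count is wrong as stated: if $H(g)|_B$ has more than $\mult_\lambda$ eigenvalues in $(\lambda-\epsilon,\lambda+\epsilon)$ you are only guaranteed a normalised vector $\phi$ in the range of $\mathbf{P}_{(\lambda-\epsilon,\lambda+\epsilon)}[H(g)|_B]$ orthogonal to $\operatorname{Ker}(H(0)|_B-\lambda)$, not an eigenvector; already for $\mult_\lambda=1$ and two nearby eigenvalues both eigenvectors may overlap the kernel. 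Since Lemma~\ref{l:weg2} is stated for eigenvectors, you cannot cite it directly: you must rerun its argument with $(\lambda-\lambda')\phi$ replaced by $(H(g)|_B-\lambda)\phi$, whose norm is still $\le\epsilon$ for $\phi$ in the range of the spectral projection. This repair is easy, but it is missing, and your later ``cleanest route'' paragraph still speaks of an eigenvector with eigenvalue $\lambda'$. (The paper sidesteps the issue by working with the excess count $N=\tr\mathbf{P}_{[\lambda-\epsilon,\lambda+\epsilon]}-\mult_\lambda$ and applying the mechanism of Lemma~\ref{l:weg2} to the $\mathbf{Q}_{\{\lambda\}}$-components of \emph{all} eigenvectors with eigenvalue in the interval, then estimating $\mathbb{E}N^s$.) A minor related point: no Cauchy--Schwarz is needed in your ``localisation'' step, since positivity gives directly $\Im\tr P_\Gamma G_{\lambda+i\epsilon}[H(g)|_B]P_\Gamma^*\ge\frac{1}{2\epsilon}\|\phi|_{B\cap\Gamma}\|^2$, so no factor $\#(B\cap\Gamma)$ arises there.

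The second gap is the one you flag but do not close: under \Reg the potential is unbounded, so in your Markov step the random quantity $\|V|_B\|_\infty^{2s}$ cannot be pulled out of the expectation or ``absorbed into $C$''; as written that display is valid only for compactly supported $\mu$, and this is precisely where the lemma's constant must be produced. The paper's fix is cheap but essential: keep everything inside the expectation, bound $\|V|_B\|_\infty^{2s}\le\sum_{y\in B\cap\Gamma}|V(y)|^{2s}$ (this is the source of one of the two factors $\#(B\cap\Gamma)$; the other comes from $\Im\tr P_\Gamma GP_\Gamma^*=\sum_{x\in B\cap\Gamma}\Im G(x,x)$), and then decouple $|V(y)|^{2s}$ from $|G(x,x)|^s$ by Cauchy--Schwarz, using {\bf Reg2)} to control $\mathbb{E}|V(y)|^{4s}$ and the $g$-rescaled Wegner estimate $\mathbb{E}|G(x,x)|^{2s}\le Cg^{-2s}$ for the other factor, with $s$ small enough that $4s\le q$ and $2s<\alpha$; this gives exactly $C\epsilon^sg^s\gap_\lambda^{-2s}(\#B\cap\Gamma)^2$, the $g^{2s}$ from Lemma~\ref{l:weg2} being offset by the $g^{-s}$ from Wegner. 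Your alternative sketch for avoiding $\|V|_B\|_\infty$ is not developed far enough to substitute for this step, so as it stands the proposal does not yet prove the stated bound under the paper's hypotheses.
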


\begin{proof}
Suppressing the dependence of
the spectral projectors on the operator $H(g)|_B$,
we have:
\begin{align*}\tr \mathbf{P}_{[\lambda-\epsilon,\lambda+\epsilon]}&=\tr \mathbf{P}_{\{\lambda\}}\mathbf{P}_{[\lambda-\epsilon,\lambda+\epsilon]}+\tr \mathbf{Q}_{\{\lambda\}}\mathbf{P}_{[\lambda-\epsilon,\lambda+\epsilon]}  \\
&\le \tr \mathbf{P}_{\{\lambda\}}+2\epsilon\Im \tr \mathbf{Q}_{\{\lambda\}} \mathbf{P}_{[\lambda-\epsilon,\lambda+\epsilon]} G_{\lambda + i\epsilon}[H(g)|_B]\mathbf{P}_{[\lambda-\epsilon,\lambda+\epsilon]}\mathbf{Q}_{\{\lambda\}}\\
&\le  \mult_\lambda+\frac{18 \epsilon g^2 \|V|_B\|^2_\infty}{\gap_\lambda^2}
\Im \tr P_{\Gamma} G_{\lambda+i\epsilon}[H(g)|_B] P_{\Gamma}^*~,
\end{align*}
where we used Lemma~\ref{l:weg2} and the expansion
\[ \Im \tr G_{\lambda + i\epsilon}[H(g)|_B] =
\sum_j \sum_{x \in B} \frac{\epsilon |\psi_j(x)|^2}{(\lambda_j - \lambda)^2 + \epsilon^2} \]
over eigenvectors $H(g)|_B\psi_j = \lambda_j \psi_j$.

\vspace{2mm}\noindent
Let
\begin{equation}\label{eq:defN}
N=\tr \mathbf{P}_{[\lambda-\epsilon,\lambda+\epsilon]}-\mult_\lambda~,
\end{equation}
then
\begin{align} N
&\leq \frac{18 \epsilon g^2 \|V|_B\|^2_\infty}{\gap_\lambda^2}
\Im \tr P_{\Gamma} G_{\lambda+i\epsilon}[H(g)|_B] P_{\Gamma}^*\\
&\leq \frac{18 \epsilon g^2 \|V|_B\|_2^2}{\gap_\lambda^2}
\sum_{x \in B \cap \Gamma} |G_{\lambda+i\epsilon}[H(g)|_B](x, x)|~.
\end{align}
Therefore
\[ N^s \leq  \frac{18^s \epsilon^s g^{2s}}{\gap_\lambda^{2s}} \sum_{x,y \in B \cap \Gamma} |V(y)|^{2s} \left|G_{\lambda+i\epsilon}[H(g)|_B](x, x)\right|^s~. \]
Integrating over the distribution of $V(x)$ and using the Cauchy--Schwarz inequality
to decouple the potential from the Green function and then the Wegner-type estimate
in the first item of Lemma~\ref{l:am} in conjunction with Lemma~\ref{l:decoup} to bound the latter,
we conclude that
\[ \mathbb{P} \left\{ N \geq 1 \right\} \leq \mathbb{E} N^s \leq \frac{C \epsilon^s g^s}{\gap_\lambda^{2s}} \left( \# \, \Gamma \cap B\right)^2~. \]
\end{proof}
We furthermore obtain:
\begin{cor}\label{cor:weg2}
Assume that $H(g)$ satisfies \Reg, and that all the  eigenvectors of $H(0)|_B$ corresponding to
$\lambda$ are supported on $B \setminus \Gamma$. Then, for sufficiently small $s_0>0$ and any $0 < s < s_0$,
\[ \mathbb{E} \|\mathbf{Q}_{\{\lambda\}} G_{\lambda+i\epsilon}[H(g)|_B] \mathbf{Q}_{\{\lambda\}}  \|^s \leq \frac{C g^{s_0} (\# \Gamma\cap B)^2}{\gap_\lambda^{2s_0}} + \frac{C}{\gap_\lambda^{s_0}}~. \]
\end{cor}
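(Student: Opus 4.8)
## Proof proposal for Corollary~\ref{cor:weg2}

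The plan is to bound the operator norm $\|\mathbf{Q}_{\{\lambda\}} G_{\lambda+i\epsilon}[H(g)|_B] \mathbf{Q}_{\{\lambda\}}\|$ by a spectral sum over the eigenvalues of $H(g)|_B$ that are \emph{not} equal to $\lambda$, split according to their distance from $\lambda$, and then to control the two contributions using, respectively, the bound $\|\mathbf{Q}_{\{\lambda\}} G_{\lambda+i\epsilon}[H(g)|_B] \mathbf{Q}_{\{\lambda\}}\| \leq 1/\operatorname{dist}(\lambda, \sigma(H(g)|_B)\setminus\{\lambda\})$ away from $\lambda$, and the counting estimate of Lemma~\ref{l:weg2'} near $\lambda$. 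Concretely, writing the spectral decomposition $H(g)|_B \phi_j = \lambda_j \phi_j$, the operator in question has norm $\max_{\lambda_j \neq \lambda} |\lambda_j - \lambda - i\epsilon|^{-1} \leq \max_{\lambda_j\neq\lambda}(|\lambda_j-\lambda|^2+\epsilon^2)^{-1/2}$. For the energies with $|\lambda_j - \lambda| > \gap_\lambda/3$ this is at most $3/\gap_\lambda$ deterministically, which (after raising to the power $s$) produces the second term $C/\gap_\lambda^{s_0}$. So the real issue is the window $0 < |\lambda_j - \lambda| \leq \gap_\lambda/3$.

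For that window I would use a dyadic decomposition: for $k = 0, 1, 2, \ldots$ let $\epsilon_k = 2^{-k}\gap_\lambda/3$, and consider the event that $H(g)|_B$ has at least $\mult_\lambda + 1$ eigenvalues in $(\lambda - \epsilon_k, \lambda+\epsilon_k)$, i.e.\ that there is an eigenvalue $\lambda_j \neq \lambda$ with $|\lambda_j - \lambda| < \epsilon_k$ (here one uses that the $\mult_\lambda$ eigenvectors at $\lambda$ are, by hypothesis, supported on $B\setminus\Gamma$, so $\lambda$ persists as an eigenvalue of $H(g)|_B$ with the same multiplicity; this is what makes Lemma~\ref{l:weg2'} applicable). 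On the complement of this event, every $\lambda_j \neq \lambda$ satisfies $|\lambda_j - \lambda| \geq \epsilon_k$, so $\|\mathbf{Q}_{\{\lambda\}} G_{\lambda+i\epsilon}[H(g)|_B]\mathbf{Q}_{\{\lambda\}}\| \leq \epsilon_k^{-1} = 2^k \cdot 3/\gap_\lambda$. Hence, with $p_k$ denoting the probability of the bad event at level $k$, a layer-cake / summation-by-parts argument gives
\[
\mathbb{E}\,\|\mathbf{Q}_{\{\lambda\}} G_{\lambda+i\epsilon}[H(g)|_B]\mathbf{Q}_{\{\lambda\}}\|^s
\;\leq\; \frac{C}{\gap_\lambda^s} + \sum_{k\geq 0}\Big(\frac{2^{k+1}\cdot 3}{\gap_\lambda}\Big)^{s} p_k~,
\]
where $p_k \leq C\epsilon_k^{s_0} g^{s_0}\gap_\lambda^{-2s_0}(\#\Gamma\cap B)^2 = C\,(2^{-k}\gap_\lambda/3)^{s_0} g^{s_0}\gap_\lambda^{-2s_0}(\#\Gamma\cap B)^2$ by Lemma~\ref{l:weg2'}, valid because $\epsilon_k \leq \gap_\lambda/3$. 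Substituting, the $k$-sum is $\sum_k 2^{ks} 2^{-k s_0} \times (\text{const})$; choosing $s < s_0$ makes this geometric series convergent, and it sums to $C g^{s_0}(\#\Gamma\cap B)^2 \gap_\lambda^{-2s_0}$, which is exactly the first term claimed.

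The main obstacle, and the point that needs care, is the bookkeeping of which quantity to raise to which power: Lemma~\ref{l:weg2'} is stated for ``sufficiently small $s$'' with an $s$-dependent constant, so one should first fix the exponent $s_0$ in the range where Lemma~\ref{l:weg2'} holds, apply that lemma with exponent $s_0$ to bound $p_k$, and only afterwards use a genuinely smaller exponent $s < s_0$ on the norm so that the factor $2^{k(s - s_0)}$ is summable — this is why the statement is phrased with the two exponents $s_0$ and $s$. A secondary point is the very first step: one must verify that the hypothesis (all $\lambda$-eigenvectors of $H(0)|_B$ supported on $B\setminus\Gamma$) indeed guarantees that $\mathbf{P}_{\{\lambda\}}[H(g)|_B]$ has rank exactly $\mult_\lambda$ and that $\mathbf{Q}_{\{\lambda\}}$ really projects off all of $\ker(H(g)|_B - \lambda)$, so that the norm identity $\|\mathbf{Q}_{\{\lambda\}} G_{\lambda+i\epsilon}\mathbf{Q}_{\{\lambda\}}\| = \max_{\lambda_j\neq\lambda}(|\lambda_j-\lambda|^2+\epsilon^2)^{-1/2}$ is legitimate — this is immediate since $V$ is supported on $\Gamma$ and the eigenvectors vanish there, hence they remain eigenvectors of $H(g)|_B$ at the same $\lambda$. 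Everything else is routine.
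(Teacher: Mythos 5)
Your argument is correct and is essentially the paper's own proof: the paper likewise reduces everything to Lemma~\ref{l:weg2'} applied with the exponent $s_0$, via the layer-cake formula $\mathbb{E}\|\cdot\|^s=\int_0^\infty\mathbb{P}\{\|\cdot\|^s\ge t\}\,dt$ with the integral split at $t=(3/\gap_\lambda)^{s_0}$ (bounding the integrand by $1$ below the threshold and by the Wegner tail above it), and your dyadic decomposition over scales $\epsilon_k=2^{-k}\gap_\lambda/3$ is just a discretised version of that integral, with $s<s_0$ playing exactly the same summability role. The only cosmetic discrepancy is that your geometric series actually sums to $C\,g^{s_0}(\#\,\Gamma\cap B)^2\,\gap_\lambda^{-s_0-s}$ rather than $\gap_\lambda^{-2s_0}$, which is harmless since $\gap_\lambda$ is bounded and $s<s_0$.
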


\begin{proof}
We have:
\[ \mathbb{E} \|\mathbf{Q}_{\{\lambda\}}  G_{\lambda+i\epsilon}[H(g)|_B] \mathbf{Q}_{\{\lambda\}}  \|^s
= \int_0^\infty \mathbb{P} \left\{ \|\mathbf{Q}_{\{\lambda\}}  G_{\lambda+i\epsilon}[H(g)|_B] \mathbf{Q}_{\{\lambda\}}  \|^s  \geq t \right\} dt~.\]
For $t \leq (3/\gap_\lambda)^{s_0}$ we bound the integrand by $1$, and for larger $t$ we use Lemma~\ref{l:weg2'} (with $s_0$ in place of $s$).
\end{proof}

\section{Localisation}\label{s:loc}

\subsection{Outside the spectrum of $H_\Gamma$}\label{s:loc1}

In this section we prove Theorem~\ref{thm:loc1}. It will be convenient to drop the assumption
{\bf Inv)}, and to work on a general lattice $\Lam$ which we only assume to have bounded
connectivity $\leq \kappa$.

 If $X \subset \Lam$, let
$T_X: \ell^2(X^c) \to \ell^2(X)$ be the adjacency operator,
\[ T_X(x, y) = \begin{cases} 1, & x \sim y \\ 0, &\text{otherwise} \end{cases}~.\]
The condition for localisation is expressed in terms of the kernel
\begin{equation}\label{eq:defK}
K= (P_\Gamma \Delta P_\Gamma^* + T_\Gamma G_z[H_\Gamma] T_\Gamma^*)^\od~.
\end{equation}

\begin{prop}\label{prop:loc}
Let $H(g)$ be a $\Gamma$-trimmed random Schr\"odinger operator satisfying \Reg on a lattice $\Lam$ of
connectivity $\leq \kappa$. For any $0 < s < \alpha(1 + 2\alpha q^{-1})^{-1}$ there exists $C_s>0$ that may depend on $s$ and the constants in \Reg, such that
the following holds: if
\[ g^s > C_s \chi_\rho(|K|^s)~,\]  then
\begin{multline*}\chi_{\rho}(\mathbb{E} |G_z[H(g)]|^s)
  \\\leq \frac{C_s \kappa e^{\|\rho\|}}{g^s - C_s \chi_{\rho}(|K|^s)}
   \chi_{\rho}(|G_z[H_\Gamma]|^s) \left\{ 1 + \kappa e^{\|\rho\|} \chi_{\rho}(|G_z[H_\Gamma]|^s) \right\}~.
\end{multline*}
\end{prop}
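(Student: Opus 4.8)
The strategy is to reduce the bound on $\chi_\rho(\mathbb{E}|G_z[H(g)]|^s)$ to an application of the Aizenman--Molchanov Lemma~\ref{l:am} on the sublattice $\Gamma$, and then to transfer the resulting estimate from $\Gamma$ back to the full lattice $\Lam$ via the resolvent identity (\ref{eq:res.chi}). The key algebraic observation is that the random potential lives only on $\Gamma$, so everything should be organised around the restriction $P_\Gamma G_z[H(g)] P_\Gamma^*$. First I would apply the Schur--Banachiewicz formula (\ref{eq:SB}) with $X = \Gamma$ to the operator $A = H(g) - z$. Since the entries of $gV$ sit only on the $\Gamma$-block, this yields
\[ P_\Gamma G_z[H(g)] P_\Gamma^* = \left( gV + P_\Gamma(H(0)-z)P_\Gamma^* - P_\Gamma(H(0)-z)P_{\Gamma^c}^* \, G_z[H_\Gamma] \, P_{\Gamma^c}(H(0)-z)P_\Gamma^* \right)^{-1}~, \]
where I use that $P_{\Gamma^c}(H(0)-z)P_{\Gamma^c}^* = H_\Gamma - z$ because $V$ vanishes off $\Gamma$. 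Writing $H(0) = -\Delta + V_0$ and noting that the off-diagonal part of $-\Delta$ is minus the adjacency operator, the cross terms $P_\Gamma(H(0)-z)P_{\Gamma^c}^*$ are exactly $-T_\Gamma^*$ (up to orientation), so the effective operator in the parentheses is $gV + z'\mathbb{1} + \widetilde{A}$ where $\widetilde{A}$ is a $z$-dependent self-adjoint-in-form operator on $\ell^2(\Gamma)$ whose off-diagonal part is precisely the kernel $K$ of (\ref{eq:defK}) (the diagonal parts get absorbed into the effective potential, which does not affect the Aizenman--Molchanov argument since that argument only uses the decoupling in the $V(y)$ variable and the off-diagonal norm $\chi_\rho(|K^{\od}|^s) = \chi_\rho(|K|^s)$).

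Second, I would apply Lemma~\ref{l:am} to this effective operator on $\Lam' = \Gamma$: the hypothesis $g^s > C_s \chi_\rho(|K|^s)$ is exactly the smallness condition needed, and the decoupling inequality (\ref{eq:decoup}) holds for $\mu$ by Lemma~\ref{l:decoup} under \Reg, with the range of $s$ dictated by the moment condition $q \geq (sl + rm)\alpha/(\alpha - rm)$ specialised to the single-pole case $l=0$, $m=1$, $r = s$, which gives the stated constraint $s < \alpha(1 + 2\alpha q^{-1})^{-1}$. This produces
\[ \chi_\rho\!\left( \mathbb{E} \, \big| P_\Gamma G_z[H(g)] P_\Gamma^* \big|^s \right) \leq \frac{C_s}{g^s - C_s \chi_\rho(|K|^s)}~. \]
Third, I would feed this into the resolvent-identity bound (\ref{eq:res.chi}) with $A = H(g)$ and $X = \Gamma$, whose right-hand side involves $\chi_\rho(G_z[H_\Gamma])$ (for which $\chi_\rho(|G_z[H_\Gamma]|^s)$ is the relevant quantity after taking $s$-th powers, using $|a+b|^s \leq |a|^s + |b|^s$ throughout and $\chi_\rho(|\,\cdot\,|^s) \le \chi_\rho(\cdot)$-type monotonicity), the connectivity constant $\kappa$, the factor $e^{\|\rho\|}$, and $\chi_\rho(P_\Gamma G_z[H(g)] P_\Gamma^*)$. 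Taking expectations, using that $H_\Gamma$ is deterministic, and combining with the second step yields exactly the claimed inequality, with the $\{1 + \kappa e^{\|\rho\|}\chi_\rho(|G_z[H_\Gamma]|^s)\}$ factor coming from the bracketed term in (\ref{eq:res.chi}).

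\textbf{Main obstacle.} The genuinely delicate point is the bookkeeping in passing from $s$-th moments of a sum to sums of $s$-th moments while keeping the constants uniform: the resolvent identity (\ref{eq:res.chi}) is stated for the kernels themselves, but after applying $|\cdot|^s$ and $\mathbb{E}$ one needs a version of it with $|G_z[H_\Gamma]|$ replaced by $|G_z[H_\Gamma]|^s$, which relies on $0<s<1$ and the sub-additivity $|a+b|^s \le |a|^s+|b|^s$ together with the fact that products of independent-looking terms factor under expectation — here one must be careful that $G_z[H_\Gamma]$ is non-random so no real independence is needed, only the deterministic bound. A secondary point requiring care is verifying that the diagonal correction to the effective potential on $\Gamma$ (the real part of $z'$ plus the diagonal of $\widetilde{A}$) is harmless: one must check that Lemma~\ref{l:am}'s proof only ever uses the representation $G_z[A+gV](x,y) = a(V(y)-b)^{-1}$ and the off-diagonal quantity $\chi_\rho(|A^{\od}|^s)$, so that an arbitrary (even $z$-dependent and complex) diagonal addition to $A$ changes nothing in the estimate. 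Both are routine once set up correctly, but they are where a careless argument would slip.
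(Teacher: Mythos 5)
Your proposal is correct and follows essentially the same route as the paper: Schur--Banachiewicz with $X=\Gamma$ to write $P_\Gamma G_z[H(g)]P_\Gamma^*$ as the resolvent of $gV|_\Gamma$ plus an effective operator whose off-diagonal part is $K$, then Lemma~\ref{l:am} (with Lemma~\ref{l:decoup} supplying the decoupling) on the $\Gamma$-block, and finally the resolvent-identity bound (\ref{eq:res.chi}) to return to the full lattice. The only slip is cosmetic: the stated range $s<\alpha(1+2\alpha q^{-1})^{-1}$ comes from verifying (\ref{eq:decoup}) via Lemma~\ref{l:decoup} with $l=1$, $m=1$, $r=s$ (i.e.\ $q\ge 2s\alpha/(\alpha-s)$), not from the single-pole case $l=0$, which would give a weaker constraint.
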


Let us show that Proposition~\ref{prop:loc} implies Theorem~\ref{thm:loc1}.

\begin{proof}[Proof of Theorem~\ref{thm:loc1}]
Suppose  $H(g)$ satisfies the Assumptions.
According to the Combes--Thomas estimate (\ref{eq:ct}), $G_z[H_\Gamma]$ decays exponentially for
 $\lambda \notin \sigma(H_\Gamma)$, therefore $\chi_\rho[\left|G_z[H_\Gamma]\right|^s]$ is finite when
 $\rho$ is a small multiple of the graph metric on $\mathbb{Z}^d$, and hence so is
 \[ \chi_{\rho}(|K|^s) \leq \kappa e^{\|\rho\|} + \kappa^2 e^{2\|\rho\|} \chi_{\rho}(|G_z[H_\Gamma]|^s)~.\]
 According to Proposition~\ref{prop:loc}, $\chi_{\rho}(\mathbb{E} |G_z[H(g)]|^s)$ is finite for sufficiently
 large $g$, therefore (\ref{eq:fmbd}) holds.
\end{proof}

Now we prove Proposition~\ref{prop:loc}.
\vspace{2mm}\noindent
\begin{proof}[Proof of Proposition~\ref{prop:loc}]
Using the Schur--Banachiewicz formula \eqref{eq:SB} we get
\[ P_\Gamma G_z[H] P_\Gamma^* = G_z[gV|_\Gamma - D - K]~, \]
where $D+K$ is the decomposition of
\[ P_\Gamma \Delta P_\Gamma^* -  V_0|_\Gamma + T_\Gamma G_z[H_\Gamma] T_\Gamma^* \]
into diagonal and off-diagonal parts (this notation is consistent with (\ref{eq:defK})). According to
the Aizenman--Molchanov estimate (Lemma~\ref{l:am}), for
\[ g^s > C_s \chi_\rho(|K^\od|^s) \]
we have:
\[ \chi_\rho(P_\Gamma G_z[H] P_\Gamma^*) \leq \frac{C_s}{g^s - C_s \chi_\rho(|K^\od|^s)} \]
(the assumption (\ref{eq:decoup}) is satisfied according to Lemma~\ref{l:decoup}.) The proposition now
follows from the corollary (\ref{eq:res.chi}) of the resolvent identity.
\end{proof}

\subsection{Double insulation} \label{s:loc2}

\begin{proof}[Proof of Theorem~\ref{thm:loc2}]
Let us partition the lattice $\mathbb{Z}^d$ into disjoint pieces $B$:
\[ \mathbb{Z}^d = \biguplus_{B \in \mathfrak{B}} B~, \]
such that $\diam B \leq \mathrm{const}$ and $\partial_{\text{in}} B \subset \Gamma$ for
every $B \in \mathfrak{B}$. We do not
make any additional assumptions on the shape
of $B \in \mathfrak{B}$.

Let $x, y \in \mathbb{Z}^d$. Applying the resolvent identity (\ref{eq:resolv}), we can represent $G_z[H](x, y)$
as a sum of terms of the form
\[ G_z[H|_{B_1}](x, u_1) G_z[H|_{B_2}](u_1', u_2) G_z[H|_{B_3}](u_2', u_3) \cdots G_z[H|_{B_n}](u_{n-1}', y)~, \]
where $B_1 \ni \{x, u_1\}$, $B_2 \ni \{u_1', u_2\}$, \dots, $B_n \ni \{u_{n-1}', y\}$ are distinct boxes,
and $u_j$ is adjacent to $u_j'$. In particular, $u_j, u_{j-1}' \in \partial_{\text{in}}B_j \subset \Gamma$.

Taking fractional moments, we obtain:
\begin{multline*}\mathbb{E} |G_z[H](x, y)|^s \leq \\
\sum \mathbb{E} |G_z[H|_{B_1}](x, u_1)|^s \,\,\, \prod_{j=2}^{n-1} \mathbb{E} | G_z[H|_{B_j}](u_{j-1}', u_j)|^s
 \,\,\, \mathbb{E}|G_z[H|_{B_n}](u_{n-1}', y)|^s~.
\end{multline*}
For small $s>0$, we bound the first and last term by $C(B_j, g)$ using (\ref{eq:weg1}), and all the other terms
by $\operatorname{const} g^{-s}$ using (\ref{eq:weg0}). For large $g \geq g_0$, the resulting expansion
converges, and is exponentially decaying in $\|x-y\|$.
\end{proof}

\section{Anomalous localisation}\label{s:anom}

\begin{proof}[Proof of Theorem~\ref{thm:anom'}]
Set $\mathbf{P}_{\{\lambda \}}= \mathbf{P}_{\{\lambda\}}[H_n(0)]$, and $\mathbf{Q}_{\{\lambda \}}= \mathbf{Q}_{\{\lambda\}}[H_n(0)]$,
and let
\[ G_n^\pm = G_{\lambda \pm i\epsilon}[H_n(g)]~, \quad  G_n^\Re = \frac{G_n^+ + G_n^-}{2}~, \quad
G_n^\Im = \frac{G_n^+ - G_n^-}{2i}~. \]

\smallskip\noindent
Suppose in contrapositive that the assertion of the theorem is false. Then, for any $p > 0$, one can find a sequence $\epsilon_j \searrow 0$ such that the inequality
\[ \mathbb{E}|G_{\lambda+i\epsilon}[H (g)](x, v)|^s  \le M_p \epsilon^{-s} \|x-v\|^{-sp/2}\]
holds for all $\epsilon = \epsilon_j$ and all $x,v \in \mathbb{Z}^d$. For any $A >0$ we can find  $n = n_j$  so that $(R_n)^{-CA}\leq \epsilon \leq (R_n)^{-A}$. We shall choose the value of $A$ in the sequel.

For a fixed $x\in\Z^d$, consider a site $y\in \Z^d$ that  satisfies  $\|x-y\| \geq (R_n)^c$. Using the first resolvent identity, we can estimate
\begin{equation}\label{eq:bd1}\begin{split}
&\mathbb{E}|G_n^\Im(x,y)|^s\\
&\le \mathbb{E}|G_{\lambda+i\epsilon}[H (g)](x, y)|^s +\sum_{\langle v,u\rangle\in\partial B_n}\mathbb{E}| G_{\lambda+i\epsilon}[H(g)]( x,v)|^s| G_n^+(u,y)|^s\\ 
&\le \mathbb{E}|G_{\lambda+i\epsilon}[H(g)]( x,y)|^s+2d \epsilon^{-s}\sum_{ v\in\partial^{\text{in}} B_n}\mathbb{E}| G_{\lambda+i\epsilon}[H(g)]( x,v)|^s\\ 
&\le  M_p\left\{ \epsilon^{-s} (R_n)^{-csp/2} + 2d \epsilon^{-2s} (R_n)^{-sp/2} \right\}~.
\end{split}\end{equation}

\smallskip\noindent
On the other hand, according to Assumption~2 of the theorem,  there exists $y$ which satisfies $\|x-y\| \geq {(R_n)}^c$ and
\begin{equation}\begin{split}\label{eq:grow}
\mathbb{E}|G_n^\Im(x,y)|^s
&= \mathbb{E} |\mathbf{P}_{\{\lambda \}}G_n^\Im\mathbf{P}_{\{\lambda \}}(x,y) +
\mathbf{Q}_{\{\lambda \}}G_n^\Im\mathbf{Q}_{\{\lambda \}}(x,y)  |^s\\
&\geq |\mathbf{P}_{\{\lambda \}}G_n^\Im\mathbf{P}_{\{\lambda \}}(x,y)|^s - 
\mathbb{E} |\mathbf{Q}_{\{\lambda \}}G_n^\Im\mathbf{Q}_{\{\lambda \}}(x,y)  |^s~.\end{split}\end{equation}
According to Assumption~3 of the theorem, 
\[ f(H_n(g)) \mathbf{P}_{\{\lambda\}}[H_n(0)]=f(\lambda) \mathbf{P}_{\{\lambda\}}[H_n(0)]\] for any function $f$, therefore the assumption (\ref{eq:lrbdproj}) allows to  bound
the first term of (\ref{eq:grow}) from below by 
$\epsilon^{-s} R_n^{-C}$. If $A$ is chosen to
be sufficiently large, Corollary~\ref{cor:weg2} implies
that the second term is bounded from above
by one half of this quantity. Therefore
\[ \mathbb{E}|G_n^\Im(x,y)|^s \ge \epsilon^{-s} (R_n)^{-C} / 2~.\]
For sufficiently large $p$, this lower bound is
in contradiction with the upper bound (\ref{eq:bd1}).
\end{proof}

We conclude this section with the
\begin{proof}[Proof of Lemma~\ref{l:anom}]

We start from the identity
\[ \int_0^\infty e^{it(H-\lambda + i\epsilon)} dt = i G_{\lambda - i\epsilon}[H]~, \]
which implies:
\[ | G_{\lambda - i\epsilon}[H](x, y) | \leq \int_0^\infty |e^{itH} (x, y)| e^{-\epsilon t} dt~. \]
Taking the expectation and applying the Cauchy--Schwarz inequality, we obtain:
\[ \epsilon^2 \mathbb{E} | G_{\lambda - i\epsilon}[H](x, y) |^2
  \leq \int_0^\infty \epsilon e^{-\epsilon t} \mathbb{E} |e^{itH}(x, y)|^2 dt~.\]
This proves (\ref{eq:1ener}), since the sign of $\epsilon$ does not affect the absolute value.
\end{proof}

\section{Strong-to-weak disorder coupling}\label{s:s2w}

In this section we construct a coupling between a random operator
at strong disorder and another one at weak disorder. A similar coupling
appears in the work of Wang \cite{Wang}, who used it to construct examples
of long-range operators with exponentially decaying Green function.

\subsection{The hedgehog lattice}\label{s:hedge}

Let $\Lam$ be a lattice. Construct the hedgehog lattice $\Lam^\Sha = \Lam \times \{0, 1\}$
with bonds defined by
\[ (x, i) \sim (y, j) \iff \begin{cases}
\text{either $x = y$ and $i = 1-j$} \\
\text{or $x \sim y$ and $i=j=0$} \end{cases}~.\]
Given an operator $H(0)$ on $\Lam$ and a potential $U: \Lam \to \mathbb{C}$, consider the operator
$H^\Sha$ on $\ell^2(\Lam^\Sha)$, defined by
\begin{equation}\label{eq:defSha} H^\Sha \left( \begin{matrix} \psi_1 \\ \psi_0 \end{matrix} \right)
  = \left( \begin{matrix} U & - \mathbb{1} \\ - \mathbbm{1} &  H(0) \end{matrix} \right)
  \left( \begin{matrix} \psi_1 \\ \psi_0 \end{matrix} \right)~.
  \end{equation}
Observe that $H^\Sha$ (or rather, $H^\Sha + \mathbb{1}$) is a $(\Lam \times \{1\})$-trimmed random
Schr\"odinger operator on $\ell^2(\Lam^\Sha)$, if the values of $U$ are independent and real.

The Schur--Banachiewicz formula (\ref{eq:SB}) relates the resolvent of $H^\Sha$ to the resolvents of
$H = H(0) + U_z^\#$,  $U_z^\# = (z-U)^{-1}$, on $\Lam \times \{0\}$, and $H' = - G_z[H(0)] + U$ on $\Lam \times \{1\}$.
Namely, set $P_0 = P_{\Lam \times \{0\}}$, $P_1 = P_{\Lam \times \{1\}}$. Then
\begin{equation}\label{eq:s2w}\begin{split}
P_0 G_z[H^\Sha] P_0^* &= G_z[H(0) + U_z^\#]~,   \\
P_1 G_z[H^\Sha] P_1^* &= G_z[-G_z[H(0)] + U]~.
  \end{split}\end{equation}

\vspace{2mm} The first application of the relations (\ref{eq:s2w}) is a new derivation
of a theorem of Aizenman \cite{Aiz} which provides a sufficient condition for localisation
at weak disorder near the spectral edges.

Let $V:\Lam \to \mathbb{R}$ be a random potential, and consider the operator
$H(g) = H(0) + gV$ on $\ell^2(\Lam)$.

\begin{thm}[Aizenman]\label{thm:aiz}
Fix $0 < s < 1$, and suppose the random potential $V:\Lam \to \mathbb{R}$ satisfies the decoupling
property
\begin{equation}\label{eq:decoup1} \mathbb{E} \frac{|V(y) - a|^s}{|V(y) - b|^s}
\geq C_s^{-1} \mathbb{E} \frac{1}{|V(y)-b|^s}~, \quad a,b \in \mathbb{C}~, \, b \notin \mathbb{R}~.
\end{equation}
 Let $\rho$ be a metric on $\Lam$ so that
\[ \chi = \limsup_{\epsilon \to +0} \chi_\rho(|G_{\lambda + i\epsilon}[H(0)]|^s)  < \infty~.\]
Then, for $g^{-s} > C_\mu \chi$,
one has
\[ \limsup_{\epsilon \to +0} \chi_\rho(\mathbb{E} |G_{\lambda + i\epsilon}[H(0)+gV]|^s)
\leq\frac{C_\mu \kappa^2 e^{2\|\rho\|} \chi^2}{g^{-s} - C_\mu \chi}~. \]
\end{thm}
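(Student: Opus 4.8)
The plan is to exploit the second identity in the Schur--Banachiewicz relation~\eqref{eq:s2w}, which realises $G_z[H(0)+gV]$ (after a trivial rescaling) as a diagonal block of the resolvent of a hedgehog operator whose ``lattice part'' is $H(0)$ and whose ``spike potential'' is the weak-disorder object $-G_z[H(0)]$. Concretely, rescale: with $U = g V$, the identity $P_1 G_z[H^\Sha] P_1^* = G_z[-G_z[H(0)] + U]$ reads, after pulling out the coupling constant, as an expression for $\chi_\rho(\mathbb E|G_z[H(0)+gV]|^s)$ in terms of $\chi_\rho$ of the resolvent of an operator of the form $A + gV$ with $A = -G_z[H(0)]$ an $z$-dependent (non-real, but that is allowed in Lemma~\ref{l:am}) kernel whose off-diagonal decay is controlled precisely by the hypothesis $\chi = \limsup_{\epsilon\to+0}\chi_\rho(|G_{\lambda+i\epsilon}[H(0)]|^s) < \infty$. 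This is the ``strong-to-weak'' mechanism advertised in the section title: localisation of $H(0)+gV$ at \emph{weak} coupling $g$ near a spectral edge (where $G_z[H(0)]$ itself already decays, because $\lambda$ is outside or near the edge of $\sigma(H(0))$) is transferred to a \emph{strong}-coupling fractional-moment estimate for the spike operator, to which Lemma~\ref{l:am} applies directly.

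The steps, in order. First, write down the hedgehog operator $H^\Sha$ attached to $H(0)$ and the real potential $U$, and record from~\eqref{eq:s2w} the relation $P_1 G_z[H^\Sha]P_1^* = G_z[-G_z[H(0)]+U]$; substitute $U = gV$ and note we want the left-hand block. Second, observe that Lemma~\ref{l:am} applies to the operator $A + gV$ with $A = -G_z[H(0)]$: the decoupling hypothesis~\eqref{eq:decoup1} is exactly the hypothesis~\eqref{eq:decoup} of Lemma~\ref{l:am}, $V$ is i.i.d., and $\chi_\rho(|A^\od|^s) = \chi_\rho(|(-G_z[H(0)])^\od|^s) \le \chi_\rho(|G_z[H(0)]|^s)$, which is $\le \chi$ in the $\limsup$. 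Third, apply Lemma~\ref{l:am} at fixed $\epsilon$: for $g^s > C_s\,\chi_\rho(|A^\od|^s)$ — equivalently, after the rescaling that moves $g$ across, for $g^{-s}$ larger than $C_s$ times the decay constant — one gets $\chi_\rho(\mathbb E|G_z[A+gV]|^s) \le C_s/(g^s - C_s\chi_\rho(|A^\od|^s))$. Fourth, translate this bound back through~\eqref{eq:s2w} into a bound on $\chi_\rho(\mathbb E|G_z[H(0)+gV]|^s)$; here one uses the corollary~\eqref{eq:res.chi*} of the resolvent identity relating $\chi_\rho$ of the $P_1$-block of $G_z[H^\Sha]$ to $\chi_\rho$ of the $P_0$-block (which is $G_z[H(0)+U_z^\#]$), picking up the factor $\kappa^2 e^{2\|\rho\|}$ and the square $\chi^2$ visible in the statement. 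Fifth, take $\limsup_{\epsilon\to+0}$, using that $\chi$ is defined as a $\limsup$ and that all the bounds are uniform once $\epsilon$ is small; this yields the asserted inequality.

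The one genuinely delicate point — and the place I would be most careful — is the bookkeeping of \emph{which} resolvent is which and where the coupling constant sits, i.e.\ getting the rescaling right so that ``strong disorder for the spike'' becomes ``weak disorder $g$ for $H(0)+gV$''. In~\eqref{eq:defSha}--\eqref{eq:s2w} the spike potential in the $P_1$ block is $-G_z[H(0)]+U$ with $U = gV$, so the fractional-moment smallness condition from Lemma~\ref{l:am}, namely $g^s > C_s\chi_\rho(|(-G_z[H(0)])^\od|^s)$, is \emph{not} yet the condition $g^{-s} > C_\mu\chi$ in the theorem; one must rescale $H^\Sha$ (multiply the whole hedgehog operator, or equivalently conjugate/rescale the two blocks by $g^{\pm 1/2}$) so that the spike potential becomes $gV$ with $g$ \emph{large} corresponding to the original $g$ \emph{small}. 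Equivalently, one applies~\eqref{eq:s2w} with $z$ replaced by $z/g$ or with $H(0)$ replaced by $H(0)/g$, so that $-G_{z}[H(0)/g] = -g\,G_{gz}[H(0)]$ and the factor of $g$ lands in front of the $G_z[H(0)]$ term; chasing this through is routine but must be done cleanly. The second, milder, point is to confirm that Lemma~\ref{l:am} really tolerates a \emph{complex} non-real kernel $A = -G_z[H(0)]$ in the off-diagonal role — it does, since the proof of Lemma~\ref{l:am} only uses $A(y,y)$ through $|gV(y)+A(y,y)|$ and the decoupling bound, never the reality of the off-diagonal entries — and that $\chi_\rho(|A^\od|^s)$ inherits the $\limsup$ finiteness from $\chi$ because $|A^\od(x,y)| = |G_z[H(0)](x,y)|$ for $x\ne y$. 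With these two caveats resolved, the rest is substitution into Lemma~\ref{l:am} and the resolvent-identity corollary~\eqref{eq:res.chi*}.
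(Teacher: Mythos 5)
Your framework is the right one---the hedgehog operator \eqref{eq:defSha}, the Schur--Banachiewicz relations \eqref{eq:s2w}, Lemma~\ref{l:am} applied to the spike block, the corollary \eqref{eq:res.chi*} of the resolvent identity to transfer the bound to the lattice block, and a $\limsup$ at the end---and this is exactly the skeleton of the paper's argument. But the one step that actually converts weak disorder into strong disorder, namely the choice of the spike potential $U$, is missing, and your proposed patch does not supply it. With your choice $U=gV$, neither block of $G_z[H^\Sha]$ is the resolvent you must bound: the $P_0$ block is $G_z[H(0)+(z-gV)^{-1}]$, not $G_z[H(0)+gV]$, so your fourth step has nothing to translate; and Lemma~\ref{l:am} applied to the $P_1$ block $G_z[-G_z[H(0)]+gV]$ requires $g^s>C_s\,\chi_\rho(|(G_z[H(0)])^\od|^s)$, i.e.\ $g$ \emph{large}, which is the opposite of the theorem's hypothesis $g^{-s}>C_\mu\chi$. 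Your claim in step three that this becomes ``$g^{-s}$ larger than $C_s$ times the decay constant'' after ``the rescaling that moves $g$ across'' is false as stated: no rescaling of $H(0)$, of $z$, or of the whole hedgehog operator turns a lower bound on $g^s$ into a lower bound on $g^{-s}$ while the spike potential stays of the form $gV$ (and conjugating the two blocks by $g^{\pm 1/2}$ destroys the off-diagonal structure of \eqref{eq:defSha} on which \eqref{eq:s2w} relies).

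The missing idea is the paper's choice $U=\lambda+i\epsilon-\frac{1}{gV}$. Then $U^{\#}_{\lambda+i\epsilon}=gV$, so by the first line of \eqref{eq:s2w} the $P_0$ block is exactly $G_{\lambda+i\epsilon}[H(0)+gV]$, while the $P_1$ block is the resolvent of $-G_{\lambda+i\epsilon}[H(0)]+\lambda+i\epsilon-\frac{1}{gV}$, whose random part $-\frac{1}{gV}$ carries the effective coupling constant $g^{-1}$, large precisely when $g$ is small; Lemma~\ref{l:am} then gives the bound $C_\mu\big(g^{-s}-C_\mu\chi_\rho(|G_{\lambda+i\epsilon}[H(0)]|^s)\big)^{-1}$ under the stated condition, and \eqref{eq:res.chi*} (with $X=\Lam\times\{1\}$, so that the restricted operator appearing there is $H(0)$) produces the factor $\kappa^2e^{2\|\rho\|}\chi^2$. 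You can check directly that the inversion of the random variable is unavoidable in this scheme: even if you replace $H(0)$ by $H(0)/g$, the requirement that the $P_0$ block be a multiple of $G_{\lambda+i\epsilon}[H(0)+gV]$ forces $U^{\#}_z=V$, i.e.\ $U=z-1/V$; it is the inversion, not any rescaling, that does the strong-to-weak conversion. A secondary point you would then need to revisit: with the correct $U$ the random variable fed into Lemma~\ref{l:am} is $-1/V$ rather than $V$, so the decoupling hypothesis \eqref{eq:decoup} must be verified for the inverted variable (a rational-function decoupling as provided by Lemma~\ref{l:decoup}, cf.\ the remark following Lemma~\ref{l:am}), rather than being literally identical to \eqref{eq:decoup1} as you assert in your second step.
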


\begin{proof}
Let $U = \lambda + i\epsilon - \frac{1}{gV}$, and construct the operator $H^\Sha$
associated with $U$ as in (\ref{eq:defSha}). We have: $U^\#_{\lambda + i\epsilon} = gV$, therefore
the second half of (\ref{eq:s2w}) yields:
\[P_1 G_{\lambda+i\epsilon}[H^\Sha_\epsilon] P_1^* = G_{\lambda+i\epsilon}[-G_{\lambda+i\epsilon}[H(0)] + U]~.\]
By Lemma~\ref{l:am}, if
\[ g^{-s} > C_\mu \chi_\rho(|G_{\lambda + i\epsilon}[H(0)]|^s)~, \]
we have:
\[ \chi_\rho(\mathbb{E} |P_1 G_{\lambda+i\epsilon}[H^\Sha_\epsilon] P_1^*|^s)
\leq \frac{C_\mu}{g^{-s} - C_\mu \chi_\rho(|G_{\lambda + i\epsilon}[H(0)]|^s)}~.\]
According to the corollary (\ref{eq:res.chi*}) of the resolvent identity,
\[\begin{split}
&\chi_\rho\Big(\mathbb{E} |P_0 G_{\lambda+i\epsilon}[H^\Sha_\epsilon] P_0^*|^s\Big) \\
  &\quad\leq \kappa^2 e^{2\|\rho\|} \chi_\rho^2(G_{\lambda+i\epsilon}[H(0)])
    \chi_\rho(\mathbb{E} |P_1 G_{\lambda+i\epsilon}[H^\Sha_\epsilon] P_1^*|^s) \\
  &\quad\leq\frac{C_\mu \kappa^2 e^{2\|\rho\|}  \chi_\rho^2(G_{\lambda+i\epsilon}[H(0)])  }
                 {g^{-s} - C_\mu \chi_\rho(|G_{\lambda + i\epsilon}[H(0)]|^s)}~.
\end{split}\]
Applying the first half of (\ref{eq:s2w}) and taking the upper limit as $\epsilon \to + 0$, we conclude
the proof.
\end{proof}

\subsection{Trimmed random Schr\"odinger operators}\label{s:heur}

In this short section, we use the strong-to-weak disorder coupling to provide non-rigorous support
for Conjecture~\ref{c:deloc}.

We apply the strong-to-weak disorder relations (\ref{eq:s2w}) in the direction opposite to that of
Section~\ref{s:hedge}. First consider
the hedgehog lattice ($\Lam \times \{1\}$)-trimmed operator $H^\Sha + \mathbb{1}$ corresponding to $U = gV$,
$g \gg 1$. The first part of (\ref{eq:s2w}) relates the resolvent of $H^\Sha$ to the resolvent of the operator
$H(0) + (gV)_z^\#$.

Now consider the operator $H(0) + (gV)_\lambda^\#$ for $\lambda$ in the absolutely continuous
spectrum of $H(0)$ . It is an Anderson-type
random operator at weak disorder, which is known to exhibit localisation in dimension $d = 1$ (see
 Figotin and Pastur \cite[Chapter~15A]{FP}), and is conjectured {(in fact universally accepted by physicists, cf.\ \cite{LTW})} to exhibit localisation in dimension $d=2$, and
delocalisation in dimension $d \geq 3$. Thus the same properties should hold for the
trimmed random
Schr\"odinger operator $H^\Sha + \mathbb{1}$.

Finally observe that the above reasoning is not limited to the hedgehog lattice, and can be extended to more realistic
lattices (such as $\mathbb{Z}^d$). Indeed, the Schur--Banachiewicz formula can still be applied, relating the resolvent
of $H(g)$, $g \gg 1$, to the resolvent of a more complicated Anderson-type operator at weak coupling, which
should share the phenomenological properties of the usual Anderson model.

\paragraph{Acknowledgment}
We are grateful to John Imbrie for helpful discussions  which led to the current
version of Lemma~\ref{l:bk}, and to Lana Jitomirskaya
for comments on the content of Section~\ref{s:anom} {\footnotesize(OT HEE 3HAEM: CBEPXPOCT OTBETOB K (1.5) HEECTECTBEH. HO HE 3HAKOMO: BOBEK HE TECEH 3A3OP B TEOPEME~3?)}.


\begin{thebibliography}{99}

\bibitem{Aiz} M.\ Aizenman,
Localization at weak disorder: some elementary bounds,
Special issue dedicated to Elliott H. Lieb,
Rev.\ Math.\ Phys.~6 (1994), no.~5A, 1163--1182.

\bibitem{AM} M.\ Aizenman, S.\ Molchanov,
Localization at large disorder and at extreme energies: an elementary derivation,
Comm.\ Math.\ Phys.\ 157 (1993), no.~2, 245--278.

 \bibitem{ASFH}
 M.~Aizenman, J.~H. Schenker, R.~M. Friedrich, and D.~Hundertmark,
Finite volume fractional moment criteria for {A}nderson localization,
Comm.\ Math.\ Phys.\  (2001), no.~1, 219--253.

\bibitem{BCH} J.\ M.\ Barbaroux, J.\ M.\ Combes, P.\ D.\ Hislop,
Localization near band edges for random Schr\"odinger operators,
Papers honouring the 60th birthday of Klaus Hepp and of Walter Hunziker, Part II (Z\"urich, 1995),
Helv.\ Phys.\ Acta~70 (1997), no.~1--2, 16--43.

\bibitem{BK}
J.~Bourgain, A.~Klein,
Bounds on the density of states for Schr\"odinger operators,
Invent.\ Math.~194 (2013), no.~1, 41--72.

\bibitem{CHK} J.\ M.\ Combes, P.\ D.\ Hislop, and F.\ Klopp. An optimal Wegner estimate and its application to the global continuity of the IDS for random Schr\"odinger operators.
Duke\ Math.\ J., 140 (2007), no.~3, 469--498.

\bibitem{CT} J.\ M.\ Combes, L.\ Thomas,
Asymptotic behaviour of eigenfunctions for multiparticle Schr\"odinger operators.
Comm.\ Math.\ Phys.~34 (1973), 251--270.

\bibitem{EK} A.\ Elgart, A.\ Klein,
Ground state energy of trimmed discrete Schr\"o\-dinger operators and localization for trimmed Anderson models,
J.\ Spectr.\ Theory~4 (2014), no.~2, 391--413.

\bibitem{we} A.\ Elgart, M.\ Shamis, S.\ Sodin,
Localisation for non-monotone Schr\"o\-dinger operators,
J.\ Eur.\ Math.\ Soc.~16 (2014), 909--924.

\bibitem{FS} J.~Fr\"ohlich, T.~Spencer,
Absence of diffusion in the Anderson tight binding model for large disorder or low energy,
Comm.\ Math.\ Phys.~88 (1983), no.~2, 151--184.

\bibitem{I} J.~Z.~Imbrie,
Multi-scale Jacobi method for Anderson localization,
 \href{http://arxiv.org/abs/arXiv:1406.2957}{arXiv:1406.2957}

\bibitem{IKM} F.\ M.\ Izrailev, A.\ A.\ Krokhin, N.\ M.\ Makarov,
Anomalous localization in low-dimensional systems with correlated disorder,
Phys.\ Rep.\ 512 (2012), no.~3, 125--254.

\bibitem{JSS} S.\ Jitomirskaya, H.\ Schulz-Baldes, G.\ Stolz,
Delocalization in random polymer models,
Comm.\ Math.\ Phys.~233 (2003), no.~1, 27--48.

\bibitem{Kirsch} W.\ Kirsch,
An invitation to random Schr\"odinger operators,
Random Schr\"odinger operators, Panor. Synth\`eses, vol. 25, Soc. Math. France, 2008,
With an appendix by Fr\'ed\'eric Klopp, pp.~1--119.

\bibitem{Kuc} A.\ Klein,
Unique continuation principle for spectral projections of Schr\" odinger operators and optimal Wegner estimates for non-ergodic random Schr\" odinger operators,
Comm.\ Math.\ Phys.~323 (2013), 1229--1246.

\bibitem{LTW} A.~Lagendijk, B.~van Tiggelen, 
D.~S.~Wiersma, 
Fifty years of Anderson localization. Phys. Today, 62(8) (2009), 24--29.

\bibitem{FP} L.\ Pastur, A.\ Figotin,
Spectra of random and almost-periodic operators.
Grundlehren der Mathematischen Wissenschaften [Fundamental Principles of Mathematical Sciences], 297.
Springer-Verlag, Berlin, 1992. viii+587 pp.

\bibitem{RM1} C.~Rojas-Molina,
The Anderson model with missing sites,
Operators and Matrices 8 (2014), 287--299,
 \href{http://arxiv.org/abs/arXiv:1302.3640}{arXiv:1302.3640}

\bibitem{RV} C.~Rojas-Molina, I.~Veseli\' c,
Scale-free unique continuation estimates and applications to random Schr\" odinger operators.
Comm.\ Math.\ Phys.~320 (2013), 245--274.

\bibitem{St} G.\ Stolz,
An introduction to the mathematics of Anderson localization,
Entropy and the quantum II,
Contemp. Math., 552, pp.~71--108, Amer.\ Math.\ Soc., Providence, RI (2011).

\bibitem{Ves} I.\ Veseli\'{c},
Spectral analysis of percolation Hamiltonians,
Math.\ Ann.\ 331 (2005), no.~4, 841–-865. 

\bibitem{Wang} W.-M.\ Wang,
Exponential decay of Green's functions for a class of long range Hamiltonians,
Comm.\ Math.\ Phys.~136 (1991), no.~1, 35--41.

\end{thebibliography}
\end{document}